\algnewcommand\algorithmicinput{\textbf{Input:}}
\algnewcommand\algorithmicoutput{\textbf{Output:}}
\algnewcommand\Input{\item[\algorithmicinput]}
\algnewcommand\Output{\item[\algorithmicoutput]}
\newcommand{\NN}{\mathbb{N}} 
\newcommand{\RR}{\mathbb{R}} 
\newcommand{\bigO}{\mathcal{O}} 
\newcommand{\FPT}{\textnormal{\textsf{FPT}}}
\newcommand{\Wone}{\textnormal{\textsf{W[1]}}}
\newcommand{\XP}{\textnormal{\textsf{XP}}}
\newcommand{\APX}{\textnormal{\textsf{APX}}}
\newcommand{\NP}{\textnormal{\textsf{NP}}}
\newcommand{\TE}{\mathcal{E}}
\newcommand{\TG}{\mathcal{G}}
\newcommand{\N}{\mathbb N}
\newcommand{\TIS}{\textsc{Temporal~$\Delta$ Independent Set}\xspace}
\newcommand{\TC}{\textsc{Temporal~$\Delta$ Clique}\xspace}
\newcommand{\tis}{\textsc{Temporal~$\Delta$ Independent Set}\xspace}
\newcommand{\MTIS}{\textsc{Maximum Temporal~$\Delta$ Independent Set}\xspace}
\newcommand{\is}{\textsc{Independent Set}\xspace}
\newcommand{\OPVD}{vertex deletion set for order preservation\xspace}
\newcommand{\opvd}{OPVD\xspace}
\newcommand{\totalv}{$<_V$\xspace}
\newcommand{\npcomp}{\NP-complete\xspace}
\newcommand{\whard}{\Wone-hard\xspace}
\newcommand{\fpt}{\FPT\xspace}
\newcommand{\wrt}{with respect to\xspace}
\newcommand{\problemdef}[3]{
  \begin{center}
    \begin{minipage}{0.95\textwidth}
      \noindent
      \textsc{#1}

      \vspace{2pt}
      \setlength{\tabcolsep}{3pt}
      \begin{tabularx}{\textwidth}{@{}lX@{}}
        \textbf{Input:} 		& #2 \\
        \textbf{Question:} 	& #3
      \end{tabularx}
    \end{minipage}
  \end{center}
}
\newcommand{\appref}[1]{{\hyperref[proof:#1]{\appsymb}}}
\newcommand{\appLink}[1]{{\hyperref[#1]{\appsymb}}}
\newcommand{\commentout}[1]{}
\title{Temporal Interval Cliques and Independent Sets} 
\author{Danny~Hermelin}{Department of Industrial Engineering and Management, Ben-Gurion~University~of~the~Negev,
Beer-Sheva,
Israel}{hermelin@bgu.ac.il}{}{}
\author{Yuval~Itzhaki}{Department of Industrial Engineering and Management, Ben-Gurion~University~of~the~Negev,
Beer-Sheva,
Israel}{itzhaki@campus.tu-berlin.de}{}{}
\author{Hendrik~Molter}{Department of Industrial Engineering and Management, Ben-Gurion~University~of~the~Negev,
Beer-Sheva,
Israel}{molterh@post.bgu.ac.il}{}{}
\author{Rolf~Niedermeier}{TU Berlin, Faculty IV, Algorithmics and Computational Complexity, Germany}{rolf.niedermeier@tu-berlin.de}{}{}
\authorrunning{D.\ Hermelin, Y.\ Itzhaki, H.\ Molter, and R.\ Niedermeier} 
\keywords{Temporal Graphs, Vertex Orderings, Order Preservation, Interval Graphs, Algorithms and Complexity} 
\begin{document}

\maketitle

\begin{abstract}
Temporal graphs have been recently introduced to model changes to a given network that occur throughout a fixed period of time. The \TC problem, that generalizes the well known \textsc{Clique} problem to temporal graphs, has been studied in the context of finding nodes of interest in dynamic networks [TCS '16].
We introduce the \TIS problem, a temporal generalization of \textsc{Independent Set}.
This problem is \emph{e.g.}\ motivated in the context of finding conflict-free schedules for maximum subsets of tasks, that have certain (changing) constraints on each day they need to be performed.
We are specifically interested in the case where each task needs to be performed in a certain time-interval on each day and two tasks are in conflict on a certain day if their time-intervals on that day overlap.
This leads us to considering both problems on the restricted class of temporal unit interval graphs, \emph{i.e.},~temporal graphs where each layer is a unit interval graph.

We present several hardness results as well as positive results. On the algorithmic side, we provide constant-factor approximation algorithms for instances of both problems where $\tau$, the total number of time steps (layers) of the temporal graph, and $\Delta$, a parameter that allows us to model  conflict tolerance, are constants.
We develop an exact \FPT\ algorithm for \TC with respect to parameter $\tau+k$.
Finally, we use the notion of order preservation for temporal unit interval graphs that, informally, requires the intervals of every layer to obey a common ordering. For both problems we provide an \FPT\ algorithm parameterized by the size of minimum vertex deletion set to order preservation.
\end{abstract}

\section{Introduction}
\label{sec:intro}

The analysis of contact patterns between individuals in day-to-day life contexts can deliver great value in research areas such as social sciences or epidemiology of infectious diseases. Various studies have used wearable sensors to record human interaction data in high-schools and hospitals~\cite{fournet2014contact,vanhems2013estimating}.
These records typically contain data in the form of a stream, a series of discrete time steps for each participant, each specifying a set of other participants with whom they interacted. The collection of streams of all participants can be regarded as a temporal binary relation, one that specifies for every moment whether any two participants were in each other's proximity.
Naturally this can be modelled as a \emph{temporal} graph~\cite{viard2016computing}.

Temporal graphs generalize static graphs by adding a discrete temporal dimension to their edge set. Formally, a temporal graph $\TG = (V,\TE,\tau)$ is an ordered triple consisting of a set~$V$ of vertices, a set~$\TE \subseteq \binom{V}{2} \times \{1,2,\dots, \tau\}$ of \emph{time-edges}, and a maximal time label~${\tau \in \N}$. A temporal graph can be regarded as a set of~$\tau$ consecutive \emph{time steps}, in which each step is a static graph.
For~${t \in \{1,\ldots,\tau\}}$, we define the $t$-th layer as $G_t=(V,E_t)$, where~${E_t = \{\{u,v\} : (\{u,v\},t) \in \TE \}}$. We refer to Casteigts \emph{et al.}~\cite{casteigts2012time}, Flocchini \emph{et al.}~\cite{flocchini2013exploration}, Kostakos~\cite{kostakos2009temporal}, Latapy \emph{et al.}~\cite{LVM18} and Michail~\cite{michail2016introduction} for a more detailed background on temporal graphs.

\subparagraph{Temporal cliques.}

In the analysis of the contact patterns between humans, it is natural to look for important groups of people that commonly interact with each other.
When a group of participants come together for a continuous time interval, it is fair to assume that during this time, these have participated in a discussion or a meeting.
Viard et al. introduced the notion of $\Delta$-cliques to find such find events and groups
\cite{banerjee2019enumeration,bentert2019listing,himmel2017adapting,6849333,viard2016computing}.
Given a temporal graph $\TG = (V,\TE,\tau)$ and an integer $\Delta$, we say that a vertex set~$V'\subseteq V$ is a $\Delta$-clique in $\TG$ if it is a clique in the \textit{edge-union} graph of every $\Delta$ consecutive time steps of $\TG$.
That is, for any pair of distinct vertices $v\neq u \in V'$ and~$t \in \{1,..,\tau-\Delta+1\}$, there exists a $t' \in \{t,..,t+\Delta-1\}$ such that $\{u,v\} \in E_{t'}$.
We call this intersection graph of all $\Delta$ consecutive edge-union graphs $$
G=(V,\bigcap \limits_{i=1}^{\tau - \Delta+1} \ \bigcup \limits_{j=i}^{i+\Delta-1} E_j)
$$ 
the \textit{$\Delta$-association} graph of $\TG$. With this notion in mind, we can now define the problem of \textsc{Temporal $\Delta$ Clique} (see~\cref{fig:deltaclique}).

\problemdef{\TC}
{A temporal graph $\TG=(V,\TE,\tau)$ and an integer $k \in \NN$.}
{Is there set $V' \subseteq V$ of vertices such that $|V'| \geq k$ and $V'$ is a clique in the $\Delta$-association graph $G$ of $\TG$?
}

\begin{figure}[t]
\centering

\begin{tikzpicture}

    \node (0) at (4,2.1) {Temporal interval graph $\TG$};
    \draw[rounded corners] (-2.5,2.5) rectangle  (10.5,-2.0) {};
    \node (1) at (0,0) {
    \begin{tikzpicture}[ main/.style = {draw,fill=white, circle,minimum size=0.1cm}]
\def\scale{0.5}
\node (0) at (0,4*\scale) {$G_1$};

  \node[main] (1) at (3*\scale,4*\scale) {$v_1$};
  \node[main] (2) at (5.5*\scale,2*\scale) {$v_2$};
  \node[main] (3) at (4.5*\scale,-0.5*\scale) {$v_3$};
  \node[main] (4) at (1.5*\scale,-0.5*\scale) {$v_4$};
  \node[main] (5) at (0.5*\scale,2*\scale) {$v_5$};

  \draw[thick,-] (1) to (2);
  \draw[thick,-] (1) to (5);
  \draw[thick,-] (3) to (5);
  \draw[thick,-] (4) to (5);

\end{tikzpicture}
    };
    \node (2) at (4,0) {
    \begin{tikzpicture}[main/.style = {draw,fill=white, circle,minimum size=0.1cm}]
\def\scale{0.5}
\node (0) at (0,4*\scale) {$G_2$};
  \node[main] (1) at (3*\scale,4*\scale) {$v_1$};
  \node[main] (2) at (5.5*\scale,2*\scale) {$v_2$};
  \node[main] (3) at (4.5*\scale,-0.5*\scale) {$v_3$};
  \node[main] (4) at (1.5*\scale,-0.5*\scale) {$v_4$};
  \node[main] (5) at (0.5*\scale,2*\scale) {$v_5$};

  \draw[thick,-] (1) to (2);
  \draw[thick,-] (1) to (4);
  \draw[thick,-] (2) to (5);
  \draw[thick,-] (3) to (5);
\end{tikzpicture}
    };
    \node (3) at (8,0) {

\begin{tikzpicture}
[main/.style = {draw,fill=white, circle,minimum size=0.1cm}]
\def\scale{0.5}
\node (0) at (0,4*\scale) {$G_3$};

  \node[main] (1) at (3*\scale,4*\scale) {$v_1$};
  \node[main] (2) at (5.5*\scale,2*\scale) {$v_2$};
  \node[main] (3) at (4.5*\scale,-0.5*\scale) {$v_3$};
  \node[main] (4) at (1.5*\scale,-0.5*\scale) {$v_4$};
  \node[main] (5) at (0.5*\scale,2*\scale) {$v_5$};

  \draw[thick,-] (1) to (3);
  \draw[thick,-] (1) to (4);
  \draw[thick,-] (1) to (5);
  \draw[thick,-] (2) to (5);

\end{tikzpicture}

};
    \node (4) at (2,-4.25) {

\begin{tikzpicture}
[main/.style = {draw,fill=white, circle,minimum size=0.1cm}]
\def\scale{0.5}
\node (0) at (1.5,4*\scale-3){$\Delta$-conflict graph};

  \node[main] (1) at (3*\scale,4*\scale) {$v_1$};
  \node[main] (2) at (5.5*\scale,2*\scale) {$v_2$};
  \node[main] (3) at (4.5*\scale,-0.5*\scale) {$v_3$};
  \node[main] (4) at (1.5*\scale,-0.5*\scale) {$v_4$};
  \node[main] (5) at (0.5*\scale,2*\scale) {$v_5$};
    
  \draw[thick,-] (1) to (4);
  \draw[thick,-] (1) to (2);
  \draw[thick,-] (2) to (5);
  \draw[thick,-] (3) to (5);

\end{tikzpicture}

};

    \node (5) at (6,-4.25) {

\begin{tikzpicture}
[main/.style = {draw,fill=white, circle,minimum size=0.1cm}]
\def\scale{0.5}
\node (0) at (1.5,4*\scale-3) {$\Delta$-association graph};

  \node[main] (1) at (3*\scale,4*\scale) {$v_1$};
  \node[main] (2) at (5.5*\scale,2*\scale) {$v_2$};
  \node[main] (3) at (4.5*\scale,-0.5*\scale) {$v_3$};
  \node[main] (4) at (1.5*\scale,-0.5*\scale) {$v_4$};
  \node[main] (5) at (0.5*\scale,2*\scale) {$v_5$};

  \draw[thick,-] (1) to (2);
  \draw[thick,-] (1) to (4);
  \draw[thick,-] (1) to (5);
  \draw[thick,-] (2) to (5);
  \draw[thick,-] (3) to (5);

\end{tikzpicture}

};

\end{tikzpicture}

\caption{An example a temporal graph~$\TG$ with three layers, along with its $\Delta$-association graph and $\Delta$-conflict graph, for $\Delta=2$. The vertex subset~$\{v_1, v_2, v_5\}$ is a maximum sized clique for this instance, while the subset~$\{v_2, v_3, v_4\}$ is a maximum sized independent set.}
\label{fig:deltaclique}
\end{figure}
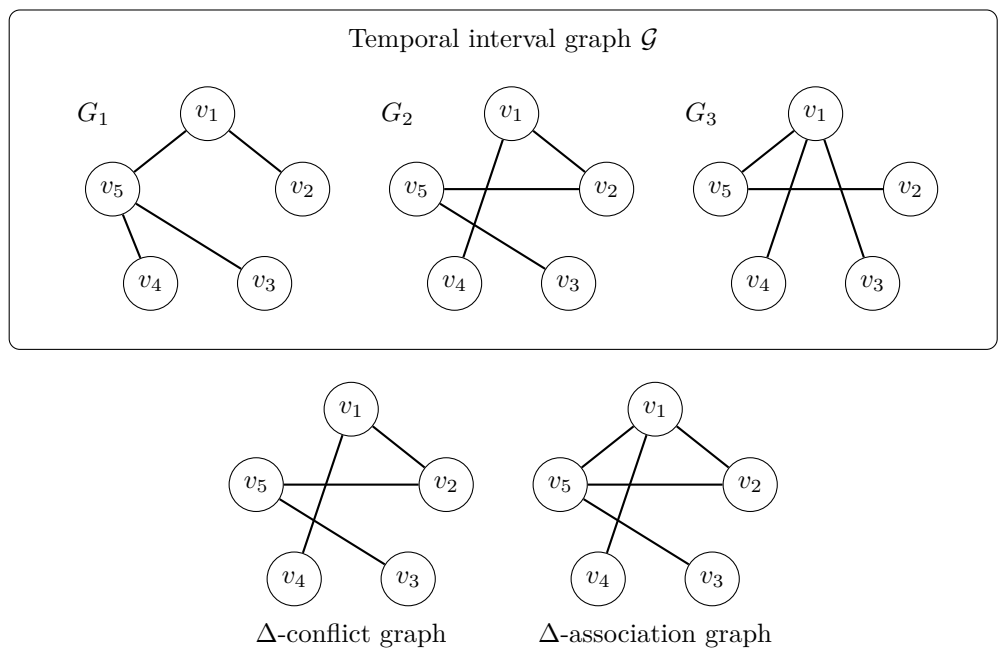

\subparagraph{Temporal independent set.}

The natural ``dual'' of \textsc{Temporal~$\Delta$ Clique} is the \textsc{Temporal~$\Delta$ Independent Set} problem. In the scenarios consider above, one might look for a group of people which are mutually non-interactive. This is natural when one wishes to schedule shared resources such as school lab or hospital ward among different human participants.

Given a temporal graph $\TG=(V,\TE,\tau)$, we say a vertex set~$V'$ is a  \emph{$\Delta$-independent set} in $\TG$ if $V'$ is an independent set in the edge-\emph{intersection} graph of every~$\Delta$ consecutive time steps of $\TG$. That is, for any pair of distinct vertices~$u \neq v \in V'$ and~$t \in \{1,\ldots,\tau-\Delta+1\}$, there exists a~$t' \in \{t,\ldots,t+\Delta-1\}$ such that~$\{u,v\}\notin E_{t'}$. We call this edge-intersection graph of every~$\Delta$ consecutive time steps 
$$
G=(V,\bigcup \limits_{i=1}^{\tau - \Delta+1} \ \bigcap \limits_{j=i}^{i+\Delta-1} E_j)
$$ 
the \emph{$\Delta$-conflict graph} of $\TG$. With this notion in mind, we can now introduce the second problem we deal with in this paper (see~\cref{fig:deltaclique}).

\bigskip

\problemdef{\TIS}
{A temporal graph $\TG=(V,\TE,\tau)$ and an integer $k \in \NN$.}
{Is there set $V' \subseteq V$ of vertices such that $|V'| \geq k$ and $V'$ is an independent set in the $\Delta$-conflict graph $G$ of $\TG$?
}

\subparagraph{Temporal interval graphs.}

Recall our problem of analysing contact patterns between human participants. Observe that in this setting, each daily conflict graph $G_t$ (corresponding to layer $t$ of the input temporal graph) can also be represented by a set of $n$ intervals, where each each interval indicates the time-interval of the student attendance in the given study room.
As first defined by Haj\'os~\cite{hajos1957uber}, a graph belongs to the class of interval graphs if there exists a mapping of its vertices to a set of intervals over a line such that two vertices are adjacent if and only if their corresponding intervals overlap. Interval graphs are used to model many natural phenomena which occur along the line of a one-dimensional axis, and have various applications in scheduling~\cite{bar2006scheduling}, computational biology~\cite{joseph1992determining}, and many other areas.

An important subclass of interval graphs is the class of \emph{unit interval graphs}: A graph~$G$ is a \emph{unit interval graph} if it has an interval representation where all intervals are of the same length. It is well-known that this graph class is equivalent to the class of \emph{proper interval graphs}, graphs with interval representation where no interval is properly contained in another~\cite{roberts1969indifference}. The restriction to unit interval graphs is quite natural for our contact pattern analysis problem, since in many cases one can assume that all participants attend their meeting place for roughly the same time. We will therefore mostly focus on the \TC and \TIS problems restricted to temporal unit interval graphs, that is temporal graphs where each layer is a unit interval graph. 

\subparagraph{Order-preserving temporal interval graphs.}
Considering our example of \TC, it may be a reasonable assumption that some participants generally prefer to meet in the morning while others prefer to meet in the evenings. In this scenario, we have a natural ordering on the time-intervals of the participants that stays the same or at least does not change much over the time period of $\tau$ days. We use the notion of \emph{order-preserving temporal graphs} to formalize this setting.

Order preservation on temporal interval graphs was first introduced by Fluschnik \emph{et al.}~\cite{Nie2020temporal}. A \emph{temporal interval graph} is said to be \emph{order-preserving} if it admits a vertex ordering~$<_V$ such that each of its time steps can be represented by an interval model such that \emph{both} the right-endpoints and left-endpoints are ordered by~$<_V$. Fluschnik \emph{et al.}~\cite{Nie2020temporal} show that the recognition of order-preserving temporal unit interval graphs can be done in linear time, and also offer a metric to measure the distance of a temporal interval graph from being order-preserving, which they call the ``shuffle number''. It measures the maximum pairwise disagreements in the vertex ordering of any two consecutive layers. We propose an alternative metric to measure the distance to order preservation. Our distance is simply the minimum number $\ell$ of vertices to be deleted in order to obtain an order-preserving temporal interval graph, and we call it the order-preserving vertex deletion (OPVD) metric.

\subsection{Our results}

We present both positive and negative results regarding \TC and \TIS in temporal (unit) interval graphs. 

We begin with \TC in Section~\ref{sec:clique}, by first reviewing previously known results that carry over from $\tau$-track interval graphs. We then present an approximation algorithm for the unit interval case with a factor of~$(2\Delta)^{\tau-\Delta+1}$, and an \FPT~algorithm for this case with respect to the parameter $\tau + k$. As the problem is known from previous results to be polynomial-time solvable for $\Delta  = \tau = 2$, and \NP-hard for $\Delta = \tau = 3$, we complement these two results with an \NP-hardness result for the case where $\Delta=2$ and~$\tau=11$. We also give a \Wone-hardness proof for parameter $k + \tau$, when the intervals are not necessarily of unit length.

We proceed to study \TIS in Section~\ref{sec:IS}. We show that the simple greedy algorithm achieves an approximation factor of $2^\Delta(\tau-\Delta+1)$ for the unit interval case. We then show that previous results on $\tau$-track interval graphs limit any further positive results, as the problem is known to be both \APX-hard and \Wone-hard even when there are only two time steps (\emph{i.e.} $\tau =2$).

As both problems are already hard in quite restrictive settings, we turn to discuss order-preserving temporal unit interval graphs in~\cref{chap:opTG}. We show that computing the OPVD set (\emph{i.e.},~the set of vertices whose removal leaves an order-preserving graph) of a unit interval temporal graph is \NP-hard. We complement this result by providing an \fpt algorithm for computing an OPVD set when parameterized by the solution size. This leads to an \fpt algorithm for both \TIS and \TC on temporal unit interval graphs when parameterized by minimum OPVD set.



\subsection{Related Work}

By now, there is already a significant body of research related to temporal graphs in general~\cite{casteigts2012time,flocchini2013exploration,kostakos2009temporal,michail2016introduction,hamm2022complexity}, as well as graph problems cast onto the temporal setting~\cite{akrida2020temporal,bentert2019listing,Nie2020temporal,himmel2017adapting,mertzios2019computing,mertzios2019sliding,viard2016computing}. There also has been previous work considering special temporal graph classes, mostly for the temporal separator problem~\cite{Nie2020temporal,MaackMNR21}.

To the best of our knowledge, the problem of \TIS has not been studied previously, but our definition is highly inspired by the \TC problem~\cite{bentert2019listing,himmel2017adapting,viard2016computing}. Viard \emph{et al.}~\cite{viard2016computing} give an exponential-time algorithm for \TC. Himmel \emph{et al.}~\cite{himmel2017adapting} have shown that \TC is in \fpt when parameterized by the so-called \emph{$\Delta$-slice degeneracy}. Bentert \emph{et al.}~\cite{bentert2019listing} generalized the result by Himmel \emph{et al.}~\cite{himmel2017adapting} for temporal $s$-plexes, a generalization of temporal cliques.

The classical static problems \is and \textsc{Clique} are clearly special cases of \tis and \TC when $\tau=1$. While both are \NP-complete for general undirected graphs \cite{FrancisGO15,GJ79}, both are solvable in polynomial time on interval graphs and some of their generalizations~\cite{da2007triangulated,FrancisGO15,gavril1974intersection,hsu1995independent,DBLP:journals/siamcomp/RoseTL76}. Thus, both \tis and \TC on temporal interval graphs are polynomial time solvable when $\tau=1$.

Moreover, for arbitrary values of $\tau$, the \textsc{Temporal~1 Independent Set} and \textsc{Temporal~$\tau$ Clique} problem are special cases of \is and \text{Clique} on $\tau$-track interval graphs~\cite{gyarfas1995multitrack}. Bar-Yehuda \emph{et al.}~\cite{bar2006scheduling} presented a $2\tau$ approximation algorithm for \is in $\tau$-track graphs, while Fellows \emph{et al.}~\cite{fellows2009multipleinterval} and Jiang~\cite{jiang2010parameterized} studied this problem from the perspective of parameterized complexity. K{\"o}nig~\cite{konig2010} showed that \textsc{Clique} is polynomial-time solvable on $2$-track interval graphs. Francis \emph{et al.}~ \cite{FrancisGO15} showed \NP-hardness on~$3$-track unit interval graphs.
as well as \APX-hardness on $\tau$-track graphs. Butman \emph{et al.}~\cite{butman2010optimization} presented a $(\tau^2-\tau+1)/2$-approximation for its containing graph class of $\tau$-interval graphs. However, they do not rule out the existence of a constant factor approximation.

The problems \textsc{Temporal $\tau$ Independent Set} and \textsc{Temporal 1 Clique} are special cases of \is and \textsc{Clique} on intersection graphs of $\tau$-dimensional hyperrectangles. Marx~\cite{marx2005efficient} showed that \textsc{Independent Set} is \npcomp~and \whard~with respect to the solution size when restricted to the intersection graphs of axis-parallel unit squares in the plane. Chleb{\'i}k and Chleb{\'i}kov{\'a}~\cite{chlebik2005approximation} proved, for instance, that \textsc{Maximum Independent Set} is \APX-hard for intersection graphs of $d$-dimensional \emph{rectangles}, yet on such graphs the optimal solution can be approximated within a factor of $d$~\cite{akcoglu2002opportunity}. On intersection graphs of~$d$-dimensional \emph{squares} \textsc{Maximum Independent Set} admits a \emph{polynomial time approximation scheme (PTAS)} for a constant~$d$~\cite{chan2003polynomial,henzinger2020dynamic,khanna1998approximating}.
Rosgen \emph{et al.} \cite{RosgenS07} showed that on hyperrectangles intersection graphs, \text{Clique} has an \XP~algorithm with respect to the dimension~$d$.


\section{Preliminaries}
\label{sec:Preliminaries}%
In this section, we first introduce all temporal graph notation and terminology used in this work, including basic concepts on interval graphs and unit interval graphs. In the final part of the section we also discuss a geometric representation of a $\Delta$-association and a $\Delta$-conflict graph of a given temporal interval graph. 

\subsection{Basic notation and definitions.}

Let $a,b \in \NN$ such that~${a<b}$. We denote the set of all integers $x$ with $a \leq x \leq b$ by $[a:b]$. As a shorthand, we use~$[b]$ when $a=1$. We also~$[a,b] \subseteq \RR$ to denote the set of real numbers between $a$ and $b$.

Let~$G=(V,E)$ denote an undirected graph, where~$V$ denotes the set of vertices and~$E\subseteq \{\{v,w\}\mid v,w\in V,\, v\neq w\}$ denotes the set of edges.
For a graph~$G$, we also write~$V(G)$ and~$E(G)$ to denote the set of vertices and the set of edges of~$G$, respectively. We denote~${n:=|V|}$. Given an ordering~\totalv over the vertices $V$ of a graph $G=(V,E)$ in which~$v_i$ is the $i$-th vertex in the ordering, we denote by $V_{[a:b]}$ the set $\{ v_i \mid i \in [a:b] \}$ and by~$G_{[a:b]}$ the graph induced by~$V_{[a:b]}$. We use the notation $\text{index}_{<_V}(v)$ for the ordinal position of $v$ in~\totalv.

An undirected \emph{temporal graph}~$\TG = (V,\TE,\tau)$ is an ordered triple consisting of a set~$V$ of vertices, a set ~$\TE \subseteq \binom{V}{2} \times [\tau]$ of \emph{time-edges}, and a maximal time label~$\tau \in \N$. Given a temporal graph~$\TG = (V,\TE,\tau)$, we denote by $E_t$ the set of all edges that are available at time $t$, that is, $E_t:=\{\{v,w\}\mid (\{v,w\},t)\in\TE\}$ and by $G_t$ the $t$-th \emph{layer} of $\TG$, that is, $G_t:=(V,E_t)$.
For two graphs $G_1$ and $G_2$ over the same vertex set $V$, we denote by:
\begin{itemize}
	\item $G_1 \cap G_2$ the \emph{edge-intersection graph} of~$G_1$ and $G_2$, formally $G_1 \cap G_2 := (V, E_1\cap E_2 )$,
  \item $G_1 \cup G_2$ the \emph{edge-union graph} of~$G_1$ and $G_2$, formally $G_1 \cup G_2 := (V, E_1\cup E_2 )$, and
  \item $\TG-V'$ the \emph{temporal graph} induced by~$V\setminus V'$, formally $\TG-V':=(V \setminus V', \TE',\tau)$ with~$\TE'=\{ (\{v,u\}, t) \mid v,u\in V\setminus V' \wedge (\{v,u\}, t) \in \TE  \}$.
\end{itemize}

\subsection{Geometric intersection graphs}

An undirected graph is an \emph{interval} graph if there exists a mapping from its vertices to intervals on the real line so that two vertices are adjacent if and only if their intervals intersect \cite{gilmore1964characterization}. Such a representation is called an \emph{intersection model} or an \emph{interval representation}. Formally, given an interval graph $G=(V,E)$, an \emph{interval representation} for $G$ is a mapping of each vertex $v\in V$ to an interval~$\rho(v) \subset \RR$ such that~${E=\{ \{v,u\} \subseteq V \mid \rho(v) \cap \rho(u) \neq \varnothing \}}$. We denote by $\text{right}_{\rho}(u)$ and by $\text{left}_{\rho}(u)$ the real value of the right and left endpoints of $v$'s associated interval on the interval representation $\rho$; the subscript $\rho$ will be omitted if it is clear from the context to which representation we refer. We let~$\rho(G)$ denote the entire representation of $G$. If the length of all intervals in~$\rho(G)$ are equal, then $\rho(G)$ is a \emph{unit interval representation}, and $G$ is an \emph{unit interval graph}~\cite{roberts1969indifference}.   

An important generalization of (unit) interval graphs in our context is the class of \emph{$t$-track interval graphs}~\cite{gyarfas1995multitrack}, for some integer $t \geq 1$. A graph $G=(V,E)$ is said to be a $t$-track interval graph if there are $\tau$ interval graphs $G_1=(V,E_1),\ldots,G_t=(V,E_t)$ such that $E=\bigcup^d_{i=1} E_i$. A $t$-track interval graph $G$ has a useful geometric representation as well. Formally, a \emph{$t$-track interval representation} for a $t$-track interval graph $G=(V,E)$ is a mapping of each vertex~$v\in V$ to a set of $t$ disjoint intervals~$\rho(v)=\{\rho_1(v),\ldots,\rho_t(v)\}$ such that $\{u,v\} \in E$ iff~$\rho_i(u) \cap \rho_i(v) \neq \emptyset$ for some $i \in \{1,\ldots,t\}$. In this way, one can think of the real line as partitioned into $t$ disjoint segments (tracks) such that the $i$'th interval of all vertices are contained strictly in segment~$i$. If all intervals in $\rho(G)$ are of the same length, we call $G$ a~\emph{$\tau$-track unit interval graph}. 

Another important generalization of (unit) interval graphs in is the class of $d$-dimensional hyperrectangle graphs, for a given integer $d \geq 1$. These are no more than intersection graphs of axis-parallel hyperrectangles in $\RR^d$. A \emph{hyperrectangle representation} of a $d$-dimensional hyperrectangle graph $G=(V,E)$ is a mapping of each vertex $v\in V$ to an axis-parallel hyperrectangle~$\rho(v) \subset \RR^d$ such that~$\{u,v\} \in E$ iff $\rho(u) \cap \rho(v) \neq \emptyset$ for some $i \in \{1,\ldots,t\}$. It is well known that $G$ is a $d$-dimensional hyperrectangle graph iff there exist $d$ interval graphs~$G_1=(V,E_1), \ldots, G_d=(V,E_d)$ such that $E=\bigcap^d_{i=1} E_i$.

\subsection{Geometric interpretation of association and conflict graphs}

We next consider geometric representations of $\Delta$-association and $\Delta$-conflict graphs that will prove useful throughout the paper. Let $\TG = (V,\TE,\tau)$ be a temporal interval graph, and let~$E_1,\ldots,E_\tau$ be the edge sets corresponding to the $\tau$ time-steps of $G$. Moreover, let~$\rho_i=\rho(G_i))$ be the interval representation of $G_i = (V,E_i)$, for each $i \in \{1,\ldots,\tau\}$.

\subparagraph{$\Delta$-association graph:}

Let $G=(V,E)$ be the $\Delta$-association graph of $\TG$. Consider first the case of~$\Delta=1$. In this case we have that $E=\bigcap^\tau_{i=1} E_i$, and so $G$ is formed by taking the intersection of $\tau$ interval graphs. Thus, $G$ is a $\tau$-dimensional hyperrectangle graph by definition, and each vertex $v \in V$ can be represented by the hyperrectangle formed by~$\rho_1(v), \ldots,\rho_\tau(v)$. If $\Delta = \tau$, then $E=\bigcup^\tau_{i=1} E_i$, implying that $G$ is a $\tau$-track interval graph, where in $\rho(G)$, each vertex $v$ is mapped to the interval set $\{\rho_1(v),\ldots,\rho_\tau(v)\}$.  

Next consider the case of $1 < \Delta < \tau$. In this case, we get a hybrid of both cases above, and we need to consider several hyperrectangles that are associated with each vertex. Observe that each ``$\Delta$-window" of~$\TG$ is formed by taking the union of $\Delta$ consecutive time-steps, and so it corresponds to a $\Delta$-track interval graph.
Moreover, the number of $\Delta$-windows in $\TG$ is ${\tau-\Delta+1}$, and $G$ is formed by taking the intersection of these $\Delta$-windows.
Thus, one can think of each $\Delta$-track interval graph as existing on a different axis, and a vertex is now associated with a set of $\Delta^{\tau-\Delta+1}$ hyperrectangles, a hyperrectangle for each combination of one of the $\Delta$ intervals on each of the ${\tau-\Delta+1}$ axis (see~\cref{fig:tsquarerepC}).
The dimension of each of these hyperrectangles is $\tau-\Delta+1$.
An edge exists between two vertices in $G$ if any two of their $(\tau-\Delta+1)$-dimensional hyperrectangles intersect.
Hence, $G$ belongs to the class of~$\Delta^{\tau-\Delta+1}$-track $(\tau-\Delta+1)$-dimensional hyperrectangle graphs.

\begin{corollary}
\label{cor:association}%
The $\Delta$-association graph of any temporal interval graph is a~$\Delta^{\tau-\Delta+1}$-track $(\tau-\Delta+1)$-dimensional hyperrectangle graph.
\end{corollary}

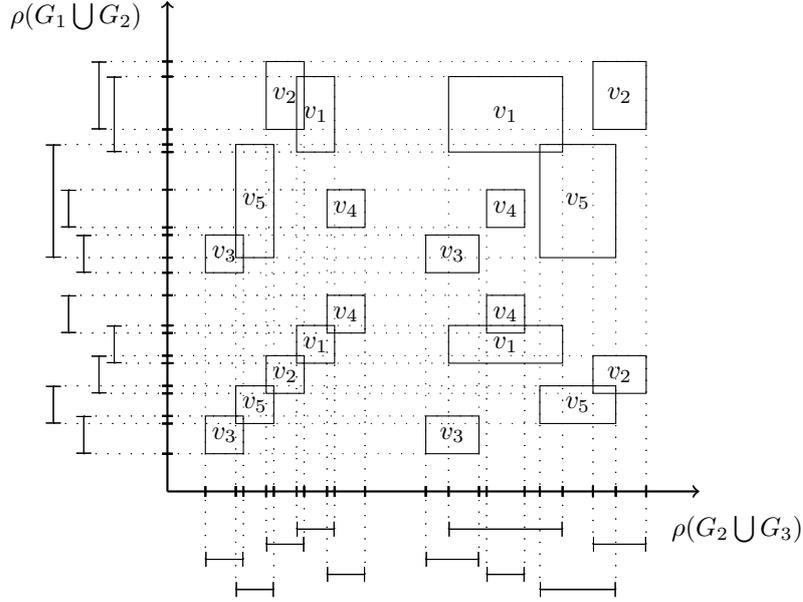
\begin{figure}[h!]
\centering

  \begin{tikzpicture}

        \def\shitw{1}
        \def\shith{0.5}

      \draw[thick,->] (0,0) -- (7.,0) node[anchor=north west] {};
    \draw[thick,->] (0,0) -- (0,6.5) node[anchor=south east] {};
    \draw[] (-1.2,6.3) node[] {$\rho(G_1\bigcup G_2)$};
     \draw[] (7.5,-0.5) node[] {$\rho(G_2 \bigcup G_3)$};

    \foreach \i\x/\w/\y/\h in {
    1/1.2-0.5/0.5/1.2/0.5,
    2/0.8-0.5/0.5/0.8/0.5,
    3/0-0.5/0.5/0/0.5,
    4/1.6-0.5/0.5/1.6/0.5,
    5/0.4-0.5/0.5/0.4/0.5
    }
    {
        \draw[thick] (\shitw+\x ,2pt) -- (\shitw+\x ,-2pt) node[anchor=north] {};
        \draw[thick] (2pt,\shith+\y) -- (-2pt,\shith+\y) node[anchor=east] {};
        \draw[thick] (\shitw+\x+\w , 2pt) -- (\shitw+\x+\w ,-2pt) node[anchor=north] {};
        \draw[thick] (2pt,\shith+ \y+\h) -- (-2pt,\shith+\y+\h ) node[anchor=east] {};
        \draw[loosely dotted] (-\i/5-0.5,\shith+\y) -- (\shitw+\x,\shith+\y) node[anchor=south east] {};
        \draw[loosely dotted] (\shitw+\x,\shith+-\i/5-0.8) -- (\shitw+\x,\shith+\y) node[anchor=south east] {};
        \draw[loosely dotted] (-\i/5-0.5,\shith+\y+\h) -- (\shitw+\x+\w,\shith+\y+\h) node[anchor=south east] {};
        \draw[loosely dotted] (\shitw+\x+\w,\shith+-\i/5-0.8) -- (\shitw+\x+\w,\shith+\y+\h) node[anchor=south east] {};
        \draw (\shitw+\x ,\shith+\y) rectangle (\shitw+\x+\w,\shith+\y+\h) {};
        \draw[] (\shitw+\x+\w/2 ,\shith+\y+\h/2) node[] {$v_{\i}$};

        \draw[|-|] (\shitw+\x, \shith+-\i/5-0.8 ) -- (\shitw+\x+\w, \shith+-\i/5-0.8 ) node[anchor=north] {};
        \draw[|-|] (-\i/5-0.5,\shith+ \y ) -- (-\i/5-0.5, \shith+\y+\h ) node[anchor=north  ] {};

        } {}

    \foreach \i\x/\w/\y/\h in {
        1/-0.3+3*\shitw/1.5/1.2/0.5,
        2/1.6+3*\shitw/0.7/0.8/0.5,
        3/-0.6+3*\shitw/.7/0/0.5,
        4/0.2+3*\shitw/.5/1.6/0.5,
        5/0.9+3*\shitw/1./0.4/0.5
    }
    {
        \draw[thick] (\shitw+\x ,2pt) -- (\shitw+\x ,-2pt) node[anchor=north] {};
        \draw[thick] (2pt,\shith+\y) -- (-2pt,\shith+\y) node[anchor=east] {};
        \draw[thick] (\shitw+\x+\w , 2pt) -- (\shitw+\x+\w ,-2pt) node[anchor=north] {};
        \draw[thick] (2pt,\shith+ \y+\h) -- (-2pt,\shith+\y+\h ) node[anchor=east] {};
        \draw[loosely dotted] (-\i/5-0.5,\shith+\y) -- (\shitw+\x,\shith+\y) node[anchor=south east] {};
        \draw[loosely dotted] (\shitw+\x,\shith+-\i/5-0.8) -- (\shitw+\x,\shith+\y) node[anchor=south east] {};
        \draw[loosely dotted] (-\i/5-0.5,\shith+\y+\h) -- (\shitw+\x+\w,\shith+\y+\h) node[anchor=south east] {};
        \draw[loosely dotted] (\shitw+\x+\w,\shith+-\i/5-0.8) -- (\shitw+\x+\w,\shith+\y+\h) node[anchor=south east] {};
        \draw (\shitw+\x ,\shith+\y) rectangle (\shitw+\x+\w,\shith+\y+\h) {};
        \draw[] (\shitw+\x+\w/2 ,\shith+\y+\h/2) node[] {$v_{\i}$};

        \draw[|-|] (\shitw+\x, \shith+-\i/5-0.8 ) -- (\shitw+\x+\w, \shith+-\i/5-0.8 ) node[anchor=north] {};
        \draw[|-|] (-\i/5-0.5,\shith+ \y ) -- (-\i/5-0.5, \shith+\y+\h ) node[anchor=north  ] {};

        } {}

    \foreach \i\x/\w/\y/\h in {
        1/-0.3+3*\shitw/1.5/1.+3*\shitw/1.,
        2/1.6+3*\shitw/0.7/1.3+3*\shitw/0.9,
        3/-0.6+3*\shitw/.7/-0.6+3*\shitw/0.5,
        4/0.2+3*\shitw/0.5/0+3*\shitw/0.5,
        5/0.9+3*\shitw/1./-0.4+3*\shitw/1.5
    }
    {
        \draw[thick] (\shitw+\x ,2pt) -- (\shitw+\x ,-2pt) node[anchor=north] {};
        \draw[thick] (2pt,\shith+\y) -- (-2pt,\shith+\y) node[anchor=east] {};
        \draw[thick] (\shitw+\x+\w , 2pt) -- (\shitw+\x+\w ,-2pt) node[anchor=north] {};
        \draw[thick] (2pt,\shith+ \y+\h) -- (-2pt,\shith+\y+\h ) node[anchor=east] {};
        \draw[loosely dotted] (-\i/5-0.5,\shith+\y) -- (\shitw+\x,\shith+\y) node[anchor=south east] {};
        \draw[loosely dotted] (\shitw+\x,\shith+-\i/5-0.8) -- (\shitw+\x,\shith+\y) node[anchor=south east] {};
        \draw[loosely dotted] (-\i/5-0.5,\shith+\y+\h) -- (\shitw+\x+\w,\shith+\y+\h) node[anchor=south east] {};
        \draw[loosely dotted] (\shitw+\x+\w,\shith+-\i/5-0.8) -- (\shitw+\x+\w,\shith+\y+\h) node[anchor=south east] {};
        \draw (\shitw+\x ,\shith+\y) rectangle (\shitw+\x+\w,\shith+\y+\h) {};
        \draw[] (\shitw+\x+\w/2 ,\shith+\y+\h/2) node[] {$v_{\i}$};

        \draw[|-|] (\shitw+\x, \shith+-\i/5-0.8 ) -- (\shitw+\x+\w, \shith+-\i/5-0.8 ) node[anchor=north] {};
        \draw[|-|] (-\i/5-0.5,\shith+ \y ) -- (-\i/5-0.5, \shith+\y+\h ) node[anchor=north  ] {};

        } {}

        \foreach \i\x/\w/\y/\h in {
    1/1.2-0.5/0.5/1.+3*\shitw/1.,
    2/0.8-0.5/0.5/1.3+3*\shitw/0.9,
    3/0-0.5/0.5/-0.6+3*\shitw/0.5,
    4/1.6-0.5/0.5/0+3*\shitw/0.5,
    5/0.4-0.5/0.5/-0.4+3*\shitw/1.5
    }
    {
        \draw[thick] (\shitw+\x ,2pt) -- (\shitw+\x ,-2pt) node[anchor=north] {};
        \draw[thick] (2pt,\shith+\y) -- (-2pt,\shith+\y) node[anchor=east] {};
        \draw[thick] (\shitw+\x+\w , 2pt) -- (\shitw+\x+\w ,-2pt) node[anchor=north] {};
        \draw[thick] (2pt,\shith+ \y+\h) -- (-2pt,\shith+\y+\h ) node[anchor=east] {};
        \draw[loosely dotted] (-\i/5-0.5,\shith+\y) -- (\shitw+\x,\shith+\y) node[anchor=south east] {};
        \draw[loosely dotted] (\shitw+\x,\shith+-\i/5-0.8) -- (\shitw+\x,\shith+\y) node[anchor=south east] {};
        \draw[loosely dotted] (-\i/5-0.5,\shith+\y+\h) -- (\shitw+\x+\w,\shith+\y+\h) node[anchor=south east] {};
        \draw[loosely dotted] (\shitw+\x+\w,\shith+-\i/5-0.8) -- (\shitw+\x+\w,\shith+\y+\h) node[anchor=south east] {};
        \draw (\shitw+\x ,\shith+\y) rectangle (\shitw+\x+\w,\shith+\y+\h) {};
        \draw[] (\shitw+\x+\w/2 ,\shith+\y+\h/2) node[] {$v_{\i}$};

        \draw[|-|] (\shitw+\x, \shith+-\i/5-0.8 ) -- (\shitw+\x+\w, \shith+-\i/5-0.8 ) node[anchor=north] {};
        \draw[|-|] (-\i/5-0.5,\shith+ \y ) -- (-\i/5-0.5, \shith+\y+\h ) node[anchor=north  ] {};

        } {}

    \end{tikzpicture}\label{tik1}
\caption{Geometric representation of the $\Delta$-association graph from \cref{fig:deltaclique}. There exist an edge between any $v,u \in V$ if and only if there exist a rectangle of $v$ and a rectangle of $u$ which intersect, $\rho(v)\bigcap \rho(u) \neq \varnothing$.
}
\label{fig:tsquarerepC}
\end{figure}

\subparagraph{$\Delta$-conflict graph:}

The $\Delta$-conflict graph $G=(V,E)$ of $\TG$ has a useful geometric interpretation as well. For the cases of $\Delta =1$ and $\Delta = \tau$, the situation is flipped. When $\Delta = 1$, we have~$E = \bigcup^\tau_{i=1} E_i$, and so $G$ is $\tau$-track interval graph. Indeed, a $\tau$-track interval representation of $G$ can be obtained by taking $\rho(v)$ to be the set of disjoint $\tau$ intervals $\rho_1(v),\ldots,\rho_\tau(v)$. When $\Delta = \tau$, we have~$E = \bigcap^\tau_{i=1} E_i$, and so $G$ is $\tau$-dimensional hyperrectangle graph, where each vertex $v$ can be represented by the $\tau$-dimensional formed from $\rho_1(v),\ldots,\rho_\tau(v)$.

For $1 < \Delta < \tau$, we again get a hybrid of both cases above. In this case, each 
$\Delta$-window of~$\TG$ is formed by taking the intersection of $\Delta$ consecutive time-steps, and so it corresponds to a~$\Delta$-dimensional hyperrectangle graph. Moreover, the number of $\Delta$-windows in $\TG$ is~${\tau-\Delta+1}$, and $G$ is formed by taking the union of these $\Delta$-windows. In this way, $G$ is a~$(\tau-\Delta+1)$-track~$\Delta$-dimensional hyperrectangle graph.
As shown in \cref{fig:tsquarerepIS}, an edge exists between two vertices in $G$ iff any two of their $(\tau-\Delta+1)$-dimensional hyperrectangles intersect in some track. 

       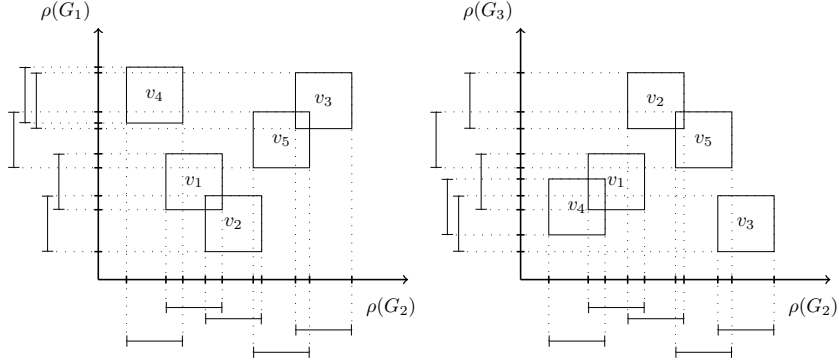
\begin{figure}[t]
\centering

    \resizebox{\textwidth*4/5}{!}{
  \begin{tikzpicture}

        \def\shitw{1}
        \def\shith{0.5}

      \draw[thick,->] (0,0) -- (5.5,0) node[anchor=north west] {};
     \draw[] (5.2,-0.5) node[] {$\rho(G_2)$};
      \draw[thick,->] (0,0) -- (0,4.5) node[anchor=south east] {$\rho(G_1)$};

    \foreach \i\x/\w/\y/\h\c in {
        1/0.2/1/0.75/1/yellow,
        2/0.9/1/0/1/black,
        3/2.5/1/2.2/1/black,
        4/-0.5/1/2.3/1/yellow,
        5/1.75/1/1.5/1/yellow
    }
    {
        \draw[thick] (\shitw+\x ,2pt) -- (\shitw+\x ,-2pt) node[anchor=north] {};
        \draw[thick] (2pt,\shith+\y) -- (-2pt,\shith+\y) node[anchor=east] {};
        \draw[thick] (\shitw+\x+\w , 2pt) -- (\shitw+\x+\w ,-2pt) node[anchor=north] {};
        \draw[thick] (2pt,\shith+ \y+\h) -- (-2pt,\shith+\y+\h ) node[anchor=east] {};
        \draw[loosely dotted] (-\i/5-0.5,\shith+\y) -- (\shitw+\x,\shith+\y) node[anchor=south east] {};
        \draw[loosely dotted] (\shitw+\x,\shith+-\i/5-0.8) -- (\shitw+\x,\shith+\y) node[anchor=south east] {};
        \draw[loosely dotted] (-\i/5-0.5,\shith+\y+\h) -- (\shitw+\x+\w,\shith+\y+\h) node[anchor=south east] {};
        \draw[loosely dotted] (\shitw+\x+\w,\shith+-\i/5-0.8) -- (\shitw+\x+\w,\shith+\y+\h) node[anchor=south east] {};
        \draw (\shitw+\x ,\shith+\y) rectangle (\shitw+\x+\w,\shith+\y+\h) {};
        \draw[] (\shitw+\x+\w/2 ,\shith+\y+\h/2) node[] {$v_{\i}$};

        \draw[|-|] (\shitw+\x, \shith+-\i/5-0.8 ) -- (\shitw+\x+\w, \shith+-\i/5-0.8 ) node[anchor=north] {};
        \draw[|-|] (-\i/5-0.5,\shith+ \y ) -- (-\i/5-0.5, \shith+\y+\h ) node[anchor=north  ] {};

        } {}

    \end{tikzpicture}\label{tik1}
            \begin{tikzpicture}

        \def\shitw{1}
        \def\shith{0.5}

      \draw[thick,->] (0,0) -- (5.5,0) node[anchor=north west] {};
       \draw[] (5.2,-0.5) node[] {$\rho(G_2)$};
      \draw[thick,->] (0,0) -- (0,4.5) node[anchor=south east] {$\rho(G_3)$};

    \foreach \i\x/\w/\y/\h in {
        1/0.2/1/0.75/1/yellow,
        2/0.9/1/2.2/1/black,
        3/2.5/1/0./1/black,
        4/-0.5/1/0.3/1/yellow,
        5/1.75/1/1.5/1/yellow
    }
        {
        \draw[thick] (\shitw+\x ,2pt) -- (\shitw+\x ,-2pt) node[anchor=north] {};
        \draw[thick] (2pt,\shith+\y) -- (-2pt,\shith+\y) node[anchor=east] {};
        \draw[thick] (\shitw+\x+\w , 2pt) -- (\shitw+\x+\w ,-2pt) node[anchor=north] {};
        \draw[thick] (2pt,\shith+ \y+\h) -- (-2pt,\shith+\y+\h ) node[anchor=east] {};
        \draw[loosely dotted] (-\i/5-0.5,\shith+\y) -- (\shitw+\x,\shith+\y) node[anchor=south east] {};
        \draw[loosely dotted] (\shitw+\x,\shith+-\i/5-0.8) -- (\shitw+\x,\shith+\y) node[anchor=south east] {};
        \draw[loosely dotted] (-\i/5-0.5,\shith+\y+\h) -- (\shitw+\x+\w,\shith+\y+\h) node[anchor=south east] {};
        \draw[loosely dotted] (\shitw+\x+\w,\shith+-\i/5-0.8) -- (\shitw+\x+\w,\shith+\y+\h) node[anchor=south east] {};
        \draw (\shitw+\x ,\shith+\y) rectangle (\shitw+\x+\w,\shith+\y+\h) {};
        \draw[] (\shitw+\x+\w/2 ,\shith+\y+\h/2) node[] {$v_{\i}$};
        \draw[|-|] (\shitw+\x, \shith+-\i/5-0.8 ) -- (\shitw+\x+\w, \shith+-\i/5-0.8 ) node[anchor=north] {};
        \draw[|-|] (-\i/5-0.5,\shith+ \y ) -- (-\i/5-0.5, \shith+\y+\h ) node[anchor=north  ] {};

        } {}

    \end{tikzpicture}\label{tiks2}
 }
      
\caption{The $\Delta$-conflict graph from \cref{fig:deltaclique}, geometrically represented as the union of two square intersection graphs. There exist an edge between any $v,u \in V$ if and only if there exist a rectangle of $v$ and a rectangle of $u$ which intersect,~$\rho(v)\bigcap \rho(u) \neq \varnothing$.
}
\label{fig:tsquarerepIS}
\end{figure}

\begin{corollary}
\label{cor:conflict}%
The $\Delta$-conflict graph of any temporal interval graph is a~$(\tau-\Delta+1)$-track $\Delta$-dimensional hyperrectangle graph.
\end{corollary}

\section{\TC}
\label{sec:clique}

In this section we present our results for \TC on temporal interval and unit interval graphs. Recall that \TC generalizes \textsc{Clique}, and is therefore \NP-hard on general graphs, and so there are plausible special cases of \TC on temporal interval graphs. We therefore begin by exploring tractable cases of this problem, and then proceed to describe some hard basic cases.

\subsection{Algorithms}

The first tractable case of \TC on temporal interval graphs is due to a result by K{\"o}nig~\cite{konig2010} who showed that \textsc{Clique} is polynomial time solvable on $2$-track graphs. In our terms, this result can be stated as follows:
\begin{proposition}[{\cite{konig2010}}]
\label{thm:delta2-p}
\TC on temporal interval graphs is polynomial-time solvable when~$\Delta = \tau = 2$.
\end{proposition}

The second tractability result for \TC on temrpoal interval graphs is due Rosgen and Stewart~\cite{RosgenS07}. They present a polynomial time algorithm for intersection graphs of axis parallel rectangles in a fixed dimension. Since the 1-association graph of a temporal interval graph is an intersection graph of axis-parallel hyperrectangles, this result yields the following:
\begin{proposition}[\cite{RosgenS07}]
\label{thm:boxicity-xp}
\TC on temporal interval graphs with~$\Delta = 1$ is solvable in~$O(n^\tau)$ time.
\end{proposition}

We complement both results above by presenting a linear time algorithm for \TC on unit interval graphs when $\Delta, \tau, k = O(1)$. This algorithm can also be viewed as an \FPT-time algorithm with respect to parameter $\tau+k$. To obtain this result, we exploit the geometric properties of the association graphs which were discussed in \cref{sec:Preliminaries}.

\begin{theorem}
\label{thm:C-PIG-FPT}
\TC on temporal unit interval graphs with is solvable in~$\bigO(2^{(k-2)(2\Delta)^{\tau-\Delta+1}}n)$ time.
\end{theorem}
\begin{proof}
Let $\TG$ be an input temporal unit interval graph for \TC.  As discussed in Section~\ref{sec:Preliminaries}, the $\Delta$-association graph $G$ of $\TG$ can be represented as an intersection of $\Delta^{\tau-\Delta+1}$ hypercubes of dimension $(\tau-\Delta+1)$, and these hypercubes can be computed in~$O(\Delta n)$ time, given the unit interval representation of $\TG$.

Clearly, if any corner of a hypercube is included in at least $k-1$ hypercubes, the instance is a yes-instance, since all these hypercubes form a clique of size at least $k$ in $G$. Therefore, let us assume that no corner of any hypercube intersects more than $k-2$ hypercubes. Observe that in this case, the maximum degree of a vertex in $G$ is $(k-2)(2\Delta)^{\tau-\Delta+1}$, as every vertex~$v$ is associated with a set of $\Delta^{\tau-\Delta+1}$ hypercubes of dimension $(\tau-\Delta+1)$ which has $2^{\tau-\Delta+1}$ corners, and every neighbor of $v$ has an associated hypercube that contains some corner of a hypercube associated with $v$. Since \textsc{Clique} on graphs of maximum degree $\delta$ can be solved in $O(2^\delta n)$ time, we get the running time stated in the theorem.
\end{proof}

We next consider approximation algorithms for \TC on temporal interval graphs. Butman \emph{et al.}~\cite{butman2010optimization} give an approximation algorithm for \textsc{Clique} on $t$-interval graphs, a graph class that contains~{$t$-track} graphs. This result directly carries over directly to \TC.
\begin{proposition}[\cite{butman2010optimization}]
\label{thm:delta3}
\TC on temporal interval graphs can be approximated within a factor of $(\Delta^2-\Delta+1)/2$ whenever~$\Delta = \tau$.
\end{proposition}
We complement the above algorithm by proving the following:
\begin{theorem}
\TC on temporal unit interval graphs can be approximated within a factor of~$(2\Delta)^{\tau-\Delta+1}$.
\end{theorem}
\begin{proof}
Let $\TG$ be a temporal unit interval graph given as input to \TC. Our algorithm exploits the the geometric representation of the $\Delta$-association graph $G$ of~$\TG$. Specifically, the algorithm iterates over all vertices, and picks for each vertex a candidate solution by taking the largest clique formed on any corner of any hypercube associated with the vertex. The largest candidate solution is then returned as output. We argue that this algorithm has an approximation ratio of~$(2\Delta)^{\tau-\Delta+1}$.

Consider some vertex $v$ of $G$ which included in a maximum clique of $G$, and let $k$ denote the size of the candidate solution of $v$. Recall that~$v$ is associated with a set of $\Delta^{\tau-\Delta+1}$ hypercubes, each having~$2^{\tau-\Delta+1}$ corners. Since any neighbor of $v$ has an associated hypercube that contains some corner of a hypercube associated with $v$, the maximum size clique of~$G$ (which  by assumption includes $v$) has size less then $k \cdot (2\Delta)^{\tau-\Delta+1}$. Since our algorithm returns a solution of size~$k$, the approximation ratio follows. 
\end{proof}

\subsection{Hardness results}

We next consider intractable cases of \TC on temporal interval graphs. Francis \emph{et al.}~\cite{FrancisGO15} extended the hardness result of Butman \emph{et al.}~\cite{butman2010optimization} and showed that \textsc{Clique} is \NP-hard on $3$-track unit interval graphs. In our terms, this can be stated as follows:
\begin{proposition}[\cite{FrancisGO15}]
\label{thm:tuigdelta3np}
\TC on temporal unit interval graphs is~\NP-hard even if~$\Delta = \tau = 3$.
\end{proposition}

Note that this result should be compared to~\cref{thm:delta2-p} which states that the problem is polynomial-time solvable when~$\Delta = \tau = 2$. However, the case of $\Delta=2$ and large $\tau$ remains open by these two results. We partially close this gap by proving the following: 
\begin{theorem}
\label{thm:delta2-np}
\TC on temporal unit interval graphs is~\NP-hard, even if~$\Delta = 2$ and~$\tau = 11$.
\end{theorem}

For the proof of Theorem~\ref{thm:delta2-np}, we need the following lemma.
\begin{lemma}
\label{lem:delta2-matching}
Let $G=(V,E)$ be such that $M:=\binom{V}{2} \setminus E$ is a matching. Then $G$ is an edge union of 2 unit interval graphs (i.e., a 2-track unit interval graph).
\end{lemma}

\begin{proof}[Proof of \cref{lem:delta2-matching}]
Let $M = \{\{a_1,b_1\}, \ldots, \{a_r,b_r\}\}$ and $V \setminus \bigcup M =\{c_1, \ldots, c_q\}$. The representation is obtained by taking for each $i \in [r]$ the intervals $I^1_{a_i}=(i,i+n)$, $I^2_{a_i}=(i-n-1,i-1)$, $I^1_{b_i}=(i-n-1,i-1)$, $I^2_{b_i}=(i,i+n)$ and for each $i \in [q]$ the intervals~${I^1_{c_i}=(0,n)}$ and $I^2_{c_i}=(0,n)$. Obviously all the intervals are of the same length, it remains to show that they indeed represent $G$.

As $r \le \frac{n}{2}$, the start point of the interval~$I^1_{a_i}=(i,i+n)$ as well as the endpoint of the interval~$I^1_{b_i}=(i-n-1,i-1)$ are contained in the interval $I^1_{c_j}=(0,n)$ and, hence, both~$a_i$ and~$b_i$ are connected to $c_j$ for every $i \in [r]$ and $j \in [q]$. Moreover, all $c_j$'s share the same interval (in both graphs), hence they form a clique as required. If $i < i'$ then the interval $I^1_{a_i}=(i,i+n)$ intersects both the interval $I^1_{a_{i'}}=(i',i'+n)$ and the interval~$I^1_{b_{i'}}=(i'-n-1,i'-1)$ and, hence, $a_i$ is connected to both $a_{i'}$ and $b_{i'}$. Next, if $i > i'$, then the interval $I^2_{b_{i'}}=(i',i'+n)$ intersects both the interval $I^2_{b_{i}}=(i,i+n)$ and the interval $I^2_{a_{i}}=(i-n-1,i-1)$ and, hence, $b_{i'}$ is connected to both $a_{i}$ and $b_{i}$. It follows that $\{a_1, \ldots, a_r\}$ and $\{b_1, \ldots, b_r\}$ are cliques in the union of the interval graphs and $a_i$ is connected to $b_{i'}$ if $i \neq i'$. Finally, the interval $I^1_{a_i}=(i,i+n)$ does not intersect the interval $I^1_{b_i}=(i-n-1,i-1)$ and the interval~$I^2_{a_i}=(i-n-1,i-1)$ does not intersect the interval $I^2_{b_i}=(i,i+n)$. Hence $a_i$ and~$b_i$ are not adjacent for every $i \in [r]$ as required, finishing the proof.
\end{proof}

\begin{proof}[Proof of \cref{thm:delta2-np}]
We reduce the problem \textsc{Independent Set in Cubic Graphs}~\cite{fleischner2010maximum}.
Let~${(G=(V,E), k)}$, where $G$ is a cubic graph and $k$ is an integer, be an instance of \textsc{Independent Set in Cubic Graphs}.
Graph $G$ can be edge colored by 4 colors; let $E =F_1\uplus F_2 \uplus F_3 \uplus F_4$ be a partition of the edges corresponding to such a coloring.
Note that each $F_i$ is a matching.
By \cref{lem:delta2-matching}, graph $(V, \binom{V}{2} \setminus F_i)$ can be represented as an edge union of 2 unit interval graphs.
Let us denote these 2 graphs as $(V,E_{3i-2})$ and $(V,E_{3i-1})$.
Let $E_{3i}=\binom{V}{2}$ for every~$i \in [3]$.

We claim that $((V,E_1, \ldots, E_{11}),k,2)$ is a yes-instance of \TC if and only if $(G, k)$ is a yes-instance of \textsc{Independent Set in Cubic Graphs}.
For the ``if'' direction, let $S$ be an independent set of size $k$ in $G$.
Let $t \in [10]$, we should show that~$S$ forms a clique in $(V,E_t \cup E_{t+1})$. This is clear if $\{t,t+1\} \cap \{3,6,9\} \neq \emptyset$.
Thus, let us assume that~$t=3i-2$ for some $i \in [4]$.
Since $S$ is independent in $G=(V,E)$, it is a clique in~$(V, \binom{V}{2} \setminus E)$ and, as~$F_i \subseteq E$, also in $(V, \binom{V}{2} \setminus F_i)$ which equals $(V,E_{3i-2} \cup E_{3i-1})$ by the construction. 

For the ``only if'' part assume that $S$ is a clique of size $k$ in $(V,E_t \cup E_{t+1})$ for every~$t \in [10]$. We claim that $S$ is an independent set in $G$.
Suppose not and let $e \in E$ have both endpoints in $S$. Then there is an $i \in [4]$ such that $e \in F_i$. But then the edge $e$ is not contained in~$(V,E_{3i-2} \cup E_{3i-1})$ contradicting that $S$ is a clique in this graph.
This concludes our proof, and so the theorem holds.
\end{proof}

We continue to the special case of \TC on temporal interval graphs when $\Delta=1$. In this case our association graph is an intersection graph of $\tau$-dimensional hyperrectangles. As any graph can be represented with a hyperrectangles intersection model provided its dimensionality is high enough, 
\textsc{Temporal 1 Clique} is \NP-hard on temporal interval graphs for sufficiently large~$\tau$. In the following we show that the problem is \Wone-hard when parameterized by $\tau +k$. This complements nicely both Proposition~\ref{thm:boxicity-xp} and Theorem~\ref{thm:C-PIG-FPT}, as it shows that one most likely cannot remove~$\tau$ from the exponent in the running time of Proposition~\ref{thm:boxicity-xp}, nor the unit restriction from Theorem~\ref{thm:C-PIG-FPT}.

\begin{theorem}
\label{thm:boxicity-w1}
\TC on temporal interval graphs is \NP-hard and \Wone-hard with respect to~$\tau+k$, even if~$\Delta = 1$.
\end{theorem}

\begin{proof}
We provide a parameterized reduction from the \textsc{Multicolored Clique} problem~\cite{fellows2009multipleinterval}. Let~$(G,k,c)$, where $G=(V,E)$ is a graph, $k$ is a positive integer  and $c:V \to [k]$ is a (not necessarily proper) coloring of the vertices, be an instance of \textsc{Multicolored Clique}.
We assume without loss of generality, that there are no edges between vertices of the same color. For $1 \le i < j \le k$, let $E_{i} = \{\{u,v\} \in E \mid c(u)=i \vee c(v)=i\}$ and $E^-_i =E\setminus E_i$. We construct the temporal graph $\mathcal{G}'=(V',E'_1, \ldots E'_k)$ as follows. First, we let $V'=E$.
Then, for every $i \in [k]$, we start by letting $E'_i=\{\{e,f\} \mid e \in E^-_i, f \in E\}$. Then for every $v \in V$ we add to $E'_{c(v)}$ the edges $\{\{e,f\} \mid e,f \in E, e \cap f = \{v\}\}$.

We claim that each $(V',E'_i)$ is an interval graph. Indeed, let the vertices of color~${i \in [k]}$ be numbered $\{v \in V \mid c(v)=i\}=\{v^i_1, \ldots v^i_{r_i}\}$. Then the graph $(V',E'_i)$ can be represented by assigning to each $e \in E^-_i$ the interval $(1,2r_i)$ and to each $e \in E_i$ assigning the interval~${(2q-1,2q)}$, where $v^i_q \in e$. Note that this is well defined, since each edge in $E_i$ contains exactly one vertex with color $i$.

We claim that $(\mathcal{G}'=(V',E'_1, \ldots E'_k), \binom{k}{2}, 1)$ is a yes-instance of \TC if and only if $(G,k,c)$ is a yes-instance of \textsc{Multicolored Clique}. For the ``if'' direction let~$S = \{v_1, \ldots, v_k\}$ be a multicolored clique in $G$ such that $c(v_i)=i$ for every $i \in [k]$. The set~$C = \{ \{v_i,v_j\} \mid 1 \le i < j \le k\} \subseteq V'$ has size $\binom{k}{2}$. Therefore we only have to show that it forms a clique in every $(V',E'_i)$, $i \in [k]$. Let $i \in [k]$. The edges in~${C \cap E^-_i}$ are adjacent to all other edges by construction, while edges in~${C \setminus E^-_i = C \cap E_i}$ are adjacent to each other since they all contain the vertex $v_i$, finishing this implication.

For the ``only if'' direction let us assume that there is a set $C$ of size $\binom{k}{2}$ which is a clique in every $(V',E'_i)$, $i \in [k]$. For every $i \in [k]$, since the edges in $C \cap E_i$ form a clique, every two edges in $C \cap E_i$ must share a vertex of color $i$. Since each edge contains at most one vertex of color $i$, there is at most one vertex of color $i$ contained in the edges of $C$. As $|C|=\binom{k}{2}$, and there are $k$ colors, it follows that there is exactly one vertex of each color contained in the edges of $C$ and every two such vertices are connected by an edge in $G$. This concludes our proof, and so the theorem holds.
\end{proof}

Note that \cref{thm:boxicity-w1} is equivalent to the statement that \textsc{Clique} is \Wone-hard when parameterized by the solution size and the  \emph{boxicity} of the input graph; that is, the minimum dimension of boxes which can be used for an intersection representation of a graph.

\section{\tis}
\label{sec:IS}

In this section we consider the \TIS problem restricted to temporal interval graphs. We begin by presenting an approximate algorithm akin to the one presented in Section~\ref{sec:clique}. We then we proceed to discuss some intractable cases for the problem. 

\subsection{Approximation Algorithms}

The first approximation algorithm we mention is due to Erlbach \emph{et al.}~\cite{ErlebachJS05} who considered the \is problem on intersection graphs of ``disk-like" objects. For \TIS, this result can be stated as follows:
\begin{proposition}[\cite{ErlebachJS05}]
\label{prop:Erlbach}%
\TIS on temporal unit interval graphs a PTAS whenever $\tau=\Delta = O(1)$.
\end{proposition}

We next show that the simple greedy algorithm for \is also performs relatively well when both $\tau$ and $\Delta$ are small (but not necessarily equal). The following lemma helps in showing this. 
\begin{lemma}
\label{lem:tdelta}%
Let $G$ be the conflict graph of \TIS on temporal unit interval graphs. For each vertex $v$ of $G$ it holds that any independent set in the graph induced by $v$ and its neighbors is of size at most $2^\Delta(\tau-\Delta+1)$.
\end{lemma}

\begin{proof}
Consider the $\tau-\Delta+1$-track $\Delta$-dimensional hypercube family representation of~$G$ (see \cref{fig:tsqares}). Each vertex $v$ in $G$ is represented by $\tau-\Delta+1$ hypercubes, each of dimension~$\Delta$. Altogether, all these hypercubes corresponding to $v$ have a total of $2^\Delta(\tau-\Delta+1)$ corners. Any neighbor of $v$ is represented by some $\tau-\Delta+1$ hypercubes, one of which must include some corner of a hypercube of~$v$. Moreover, no two independent (\emph{i.e.} non-adjacent) neighbors of $v$ can include the same corner. The lemma thus follows. 
\end{proof}

\begin{figure}[H]
\centering
\begin{tikzpicture}

        \def\x{0}
        \def\y{1}
        \def\size{1}
        \def\push{0.2}

        \draw[line width=0.5mm,] (\x-\size/2 ,\y-\size/2) rectangle (\x+\size/2,\y+\size/2) {};
        \draw[] (\x,\y) node[] {$v^1_i$};

        \draw (\x+\push, \y+\push) rectangle (\push+\x+\size,\push+\y+\size) {};
        \draw (\x-\push, \y+\push) rectangle (-\push+\x-\size,\push+\y+\size) {};
        \draw (\x+\push, \y-\push) rectangle (\push+\x+\size,-\push+\y-\size) {};
        \draw (\x-\push, \y-\push) rectangle (-\push+\x-\size,-\push+\y-\size) {};

        \def\x{1.5}
        \def\y{0}
        \node at (\x+\size/2,\y+\size/2) {\textbf{\ldots}};

        \def\x{4}
        \def\y{-0.5}
        \def\size{1}
        \def\push{0.3}

        \draw[line width=0.5mm,] (\x-\size/2 ,\y-\size/2) rectangle (\x+\size/2,\y+\size/2) {};
        \draw[] (\x,\y) node[] {$v^t_i$};
        \draw (\x+\push, \y+\push) rectangle (\push+\x+\size,\push+\y+\size) {};
        \draw (\x-\push, \y+\push) rectangle (-\push+\x-\size,\push+\y+\size) {};
        \draw (\x+\push, \y-\push) rectangle (\push+\x+\size,-\push+\y-\size) {};
        \draw (\x-\push, \y-\push) rectangle (-\push+\x-\size,-\push+\y-\size) {};

        \def\x{5.5}
        \def\y{0}
        \node at (\x+\size/2,\y+\size/2) {\textbf{\ldots}};

        \def\x{8}
        \def\y{1}
        \def\size{1}
        \def\push{0.2}

        \draw[line width=0.5mm,] (\x-\size/2 ,\y-\size/2) rectangle (\x+\size/2,\y+\size/2) {};
        \draw[] (\x,\y) node[] {$v^\tau_i$};
        \draw (\x+\push, \y+\push) rectangle (\push+\x+\size,\push+\y+\size) {};
        \draw (\x-\push, \y+\push) rectangle (-\push+\x-\size,\push+\y+\size) {};
        \draw (\x+\push, \y-\push) rectangle (\push+\x+\size,-\push+\y-\size) {};
        \draw (\x-\push, \y-\push) rectangle (-\push+\x-\size,-\push+\y-\size) {};

\end{tikzpicture}\label{}

\caption{A set of $\tau$ axis parallel $\Delta$-hypercubes cannot be intersected by more than~$2^\Delta \tau$ disjoint axis parallel $\Delta$ hypercubes.}
\label{fig:tsqares}
\end{figure}
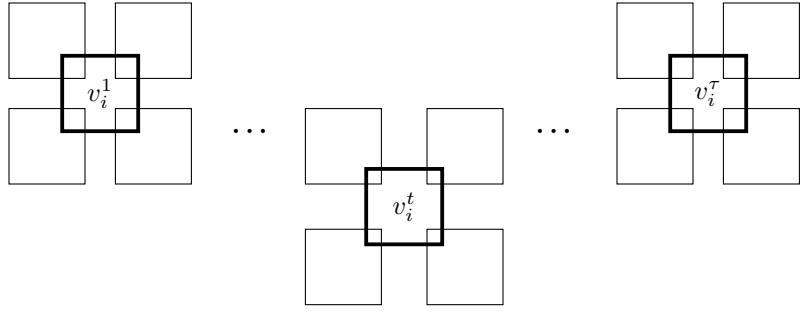

\begin{theorem}\label{thm:cfapx}
\TIS on temporal unit interval graphs can be approximated within a factor of $(\tau-\Delta+1)\cdot 2^\Delta$ in linear time .
\end{theorem}
\begin{proof}
Let $\TG$ be the input temporal unit interval graph of \TIS, and let $G$ be the $\Delta$-conflict graph of $G$. Our algorithm is simply the greedy algorithm the picks an arbitrary vertex of $G$ into the solution, and then removes all its neighbors from the graph. Clearly, the solution returned by this algorithm is an independent set. Moreover, in each step we add one vertex to our solution while removing at most $(\tau-\Delta+1)\cdot 2^\Delta$ vertices of the optimal solution, according to Lemma~\ref{lem:tdelta}. The theorem thus follows. 
\end{proof}

\subsection{Hardness results}

We next describe some intractable cases for \TIS on temporal unit interval graphs. We begin with the case that~$\Delta=1$. Here the conflict graph is simply the union of all layers of $\TG$. Thus, as mentioned in Section~\ref{sec:intro}, the class of all possible 1-conflict graphs is precisely the class of $\tau$-track unit interval graphs. We therefore directly get the following hardness result from the known hardness results for \is in $2$-track unit interval graphs~\cite{bar2006scheduling,fellows2009multipleinterval,jiang2010parameterized}.

\begin{proposition}[\cite{bar2006scheduling,fellows2009multipleinterval,jiang2010parameterized}]
\label{thm:taudelta2}
\TIS on temporal unit interval graphs is \NP-hard, \APX-hard, and \Wone-hard \wrt the solution size $k$ for $\tau \geq 2$ and $\Delta=1$.
\end{proposition}

Next we consider the case where $\Delta=\tau$, by which the class of all $\Delta$-conflict graphs is a subset of the class of $\tau$-dimensional axis-parallel hypercube (intersection) graphs. Marx~\cite{marx2005efficient} showed that \is is \Wone-hard for dimension 2 when parameterized by the solution size $k$. For our setting, this result can be stated as follows.
\begin{proposition}[\cite{marx2005efficient}]
\label{thm:taudelta3}
\TIS on temporal unit interval graphs  is \NP-hard and \Wone-hard \wrt the solution size $k$ for~$\tau=\Delta\geq2$.
\end{proposition}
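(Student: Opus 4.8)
The plan is to lift Marx's hardness result for \textsc{Independent Set} on axis-parallel unit squares in the plane through the geometric interpretation set up in the introduction: for $\tau = \Delta$ the conflict graph of a temporal unit interval graph is exactly an intersection graph of axis-parallel unit hypercubes in $\RR^\Delta$, this correspondence goes both ways, and it preserves the solution size $k$.

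First I would make the correspondence precise. Given a temporal unit interval graph $\TG = (V, \TE, \tau)$ with $\tau = \Delta$, fix for each layer $G_j$ a unit interval representation $\rho_j$ and map every vertex $v$ to the axis-parallel unit hypercube $C(v) := \rho_1(v) \times \cdots \times \rho_\Delta(v) \subseteq \RR^\Delta$. Then $C(u) \cap C(v) \neq \varnothing$ if and only if $\rho_j(u) \cap \rho_j(v) \neq \varnothing$ for every $j \in [\Delta]$, i.e.\ if and only if $\{u,v\} \in \bigcap_{j=1}^{\Delta} E_j$; since $\tau = \Delta$, the only window of $\Delta$ consecutive layers is $\{1, \dots, \Delta\}$, so $\bigcap_{j=1}^{\Delta} E_j$ is precisely the edge set of the conflict graph. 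Conversely, projecting an arbitrary family of axis-parallel unit hypercubes in $\RR^\Delta$ onto the $j$-th coordinate axis yields a unit interval representation of a unit interval graph $G_j$, and the temporal unit interval graph with layers $G_1, \dots, G_\Delta$ has exactly this hypercube intersection graph as its conflict graph. Since a vertex set is independent in the hypercube intersection graph if and only if it is independent in the conflict graph, \TIS on temporal unit interval graphs with $\tau = \Delta$ is, up to polynomial-time and parameter-preserving equivalence, \textsc{Independent Set} on intersection graphs of axis-parallel unit hypercubes in $\RR^\Delta$.

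It then remains to invoke hardness of the latter for every fixed $\Delta \geq 2$. For $\Delta = 2$ this is exactly Marx's theorem~\cite{marx2005efficient}. For $\Delta > 2$ I would add $\Delta - 2$ dummy coordinates in which every square of a given planar instance occupies the same fixed unit interval, say $[0,1]$, so that intersection in those coordinates is automatic and the $\Delta$-dimensional hypercube intersection graph equals the original square intersection graph, with $k$ unchanged; equivalently, on the temporal side one appends $\Delta - 2$ further layers, each a copy of $G_1$, keeping $\bigcap_{j=1}^{\Delta} E_j = E_1 \cap E_2$. Composing the two reductions gives the claimed \NP-hardness and \Wone-hardness \wrt~the solution size $k$.

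I do not expect a real obstacle here — the statement is essentially a restatement of Marx's result in the geometric language already introduced. The only points requiring care are the boundary behaviour of the sliding-window definition of the conflict graph when $\tau = \Delta$ (so that it is just $\bigcap_{j=1}^{\Delta} G_j$) and checking that the padding layers, respectively dummy coordinates, really do correspond to unit interval graphs and leave the conflict graph unchanged; both are routine.
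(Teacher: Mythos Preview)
Your proposal is correct and follows exactly the route the paper sketches: invoke Marx's hardness for unit squares via the geometric interpretation that for $\tau=\Delta$ the conflict graph is the intersection graph of axis-parallel unit $\Delta$-cubes. The paper gives no further details beyond citing Marx, so your explicit padding argument for $\Delta>2$ simply fills in what the paper leaves implicit (note only that your two paddings are not literally identical---the all-$[0,1]$ dummy coordinate corresponds to a complete-graph layer, not a copy of $G_1$---but either choice leaves $\bigcap_j E_j = E_1\cap E_2$ and hence works).
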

Note however, that the problem admits a PTAS in this case, see Proposition~\ref{prop:Erlbach}.

\section{Order-Preserving Temporal Interval Graphs}\label{chap:opTG}

In this section, we investigate the computational complexity of \TC and \TIS on so-called order-preserving temporal interval graphs~\cite{Nie2020temporal}.
In \cref{sec:OPPropeties}, we show that both problems can be solved in linear-time on order-preserving temporal graphs. In \cref{sec:opvd}, we show how to solve our problems on non-order-preserving temporal interval graphs via a ``distance-to-triviality'' parameterization~\cite{GHN04}. To this end, we also give an \FPT\ algorithm to compute a minimum vertex deletion set to order preservation with the set size as a parameter. Finally in \cref{sec:opvdhardness}, we show that computing a minimum vertex deletion set to order preservation is \NP-hard.

\subsection{Linear-Time Algorithms assuming Order Preservation}
\label{sec:OPPropeties}%

We say an interval graph \emph{agrees on} or is \emph{compatible with} a total order if has an interval intersection model where the right endpoints of the intervals agree with the total order.
Formally, an interval graph $G$ agrees on~\totalv~if there exists an interval representation $\rho$ for~$G$ such that for every two vertices $\forall v,u \in V$ whose ranking fulfills~$v$ \totalv~$u$, it holds that right$_{\rho}(v)<$~right$_{\rho}(u)$. We call such ordering \emph{right-endpoints (RE) orderings}.
Clearly, any right-endpoints ordering is also a \emph{left-endpoints} ordering of the mirrored intersection model.

\begin{definition}\label{def:orderpreserving}
   A temporal interval graph is \emph{order-preserving} if all of its layers agree on a single RE ordering.
\end{definition}

 Order-preserving temporal unit interval graphs can be recognized in linear time and a corresponding vertex ordering can be computed in linear time as well~\cite{Nie2020temporal}.
The computational complexity of recognizing order-preserving temporal \emph{interval} graphs remains open.

In the following, we show that RE orderings are preserved under both intersection and union of interval graphs.
This means that the conflict graph of an RE order-preserving temporal graph is an interval graph that as well agrees on the RE ordering.
We demonstrate this claim for interval graphs in \cref{thm:opintersection,thm:opunion}.
We start with showing that the intersection of two interval graphs that agree on an RE ordering is again an interval graph that agrees on the ordering.

\begin{lemma}\label{thm:opintersection}
Let $G_1$ and $G_2$ be interval graphs that agree on the total ordering \totalv. Then~$G_1\cap G_2$ is an interval graph that agrees on \totalv.
\end{lemma}

\begin{proof}

Given two interval graphs which agree on an RE ordering,
we can normalize their representations such that for each vertex, the right endpoints in both representations are the same.
To compute an intersection model for their intersection graph,
we can map each vertex to the intersection of their intervals in both representations.
We then show that this mapping is an interval representation of the edge intersection graph.

Let $V$ be a vertex set of size $n$ and let \totalv be a total ordering on $V$ such that for all~$i,j \in [n] $ it holds $ j < i\Leftrightarrow v_j$\totalv~$v_i$.
Since both $G_1$ and $G_2$ agree on \totalv, we can normalize their interval representations so that the right endpoint of the interval associated with each vertex lies on a natural number between $1$ and $n$ according to its ordinal position in~\totalv, formally $\text{index}_{<_V}(v_i)=i$.
Alternatively, we can say that for an interval graph~$G_t$ an interval representation~$\rho_t$ exists such that each $v\in V$ is mapped to an interval of the form~$\rho_t(v) = [a_v, \text{index}_{<_V}(v)]$ with $a_v \in \RR$.
In this normalized representation, the left endpoint of the interval lies on the real line between two natural numbers and is by definition smaller than the right endpoint, that is, $a_v < \text{index}_{<_V}(v)$.

Let $\rho$ be a mapping from the vertex set $V$ to a set of points on $\RR$ such that it holds~${\rho(v)=\rho_1(v)\cap\rho_2(v)}$.
To show that $\rho$ is an interval representation of $G_1\cap G_2$ we first show that $\rho(v)$ is a continuous interval for any $v\in V$,
and that for any two vertices~$v,u\in V$ it holds that~$\rho(v)\cap \rho(u) \neq \varnothing \Leftrightarrow \rho_1(v)\cap \rho_1(u)  \neq \varnothing \wedge \rho_2(v)\cap \rho_2(u)  \neq \varnothing$.

By definition $\rho_1(v)$ and $\rho_2(v)$ are both intervals on the real line,
they are therefore convex sets.
As the mapping $\rho(v)$ is an intersection of two convex sets, it must as well be a convex set and therefore it is interval on the real line.

Let $v_j <_V v_i$, we show that provided $\rho(v_i)\cap \rho(v_j) \neq \varnothing $ then both~${\rho_1(v_i)\cap \rho_1(v_j)  \neq \varnothing}$ and~${\rho_2(v_i)\cap \rho_2(v_j)  \neq \varnothing}$ must hold.
As the interval representations $\rho_1$ and $\rho_2$ are both normalized,
it immediately follows that~$\text{right}_{\rho_1}(v_i)=\text{right}_{\rho_2}(v_i)=i$.
Since both are closed intervals we know that either~$\rho_1(v)\subseteq \rho_2(v)$ or $\rho_2(v)\subseteq \rho_1(v)$.
Without loss of generality, let $\rho_1(v)\subseteq \rho_2(v)$;
it follows that $\rho(v)=\rho_1(v)$.
This means that if~$\rho_2(v_i)\cap \rho_2(v_j)=\varnothing$, then also~$\rho_1(v_i)\cap \rho_1(v_j)=\varnothing$ and therefore $\rho(v_i)\cap \rho(v_j)=\varnothing$.
Regardless, it must hold that~$j\in \rho_1(v_j)\cap\rho_2(v_j)$ as both interval representations of $v_j$ have~$j$ as the right endpoint; it follows that $j\in \rho(v_j)$.
This shows that if $j\in\rho(v_i)$ then~$j\in\rho_1(v_i)$ and $j\in\rho_2(v_i)$.
Therefore, for any $v_j <_V v_i$, if
$\rho(v_i)\cap \rho(v_j) \neq \varnothing $, then both~$\rho_1(v_i)\cap \rho_1(v_j)  \neq \varnothing$ and~$\rho_2(v_i)\cap \rho_2(v_j)  \neq \varnothing$.

Suppose that $\rho(v_i)\cap \rho(v_j) = \varnothing$, but both~$\rho_1(v_i)\cap \rho_1(v_j) \neq \varnothing$ or~${\rho_2(v_i)\cap \rho_2(v_j) \neq \varnothing}$.
This contradicts that $v_j <_V v_i$ because if $\rho_1(v_i)$ contains any point $a\in\rho_1(v_j)$ with $a<j$, then it must contain also $j$ because $\rho_1(v_i)$ is convex.

We have therefore an interval representation $\rho$ which represents the graph $G_1\cap G_2$ because~$\rho(v)\cap\rho(u)\Leftrightarrow \{v,u\}\in E_1 \wedge \{v,u\}\in E_2$ for any $v,u\in V$. Notice that $G_1\cap G_2$ agrees on $<_V$ because $\text{right}_\rho(v_i)=i$.
\end{proof}

Next, we show that the union of two interval graphs agreeing on an RE ordering yields an interval graph that also agrees on the ordering.

\begin{lemma}
\label{thm:opunion}
Let $G_1$ and $G_2$ be interval graphs that agree on the total ordering \totalv.
The union $G=G_1\cup G_2$ is an interval graph that agrees on \totalv.
\end{lemma}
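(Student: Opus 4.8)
The plan is to mirror the normalization idea from the proof of \cref{thm:opintersection}, but handle the union instead of the intersection. First I would fix a vertex set $V$ of size $n$ and a total ordering \totalv with $v_j <_V v_i \Leftrightarrow j < i$. Since both $G_1$ and $G_2$ agree on \totalv, I normalize their interval representations $\rho_1, \rho_2$ so that $\text{right}_{\rho_1}(v_i) = \text{right}_{\rho_2}(v_i) = i$ for every $i$; this is exactly the normalization used in the previous lemma. Now for each vertex $v$ define $\rho(v) := \rho_1(v) \cup \rho_2(v)$. Because both intervals share the same right endpoint $\text{index}_{<_V}(v)$, their union is again a single continuous interval: writing $\rho_t(v) = [a_v^{(t)}, \text{index}_{<_V}(v)]$, we get $\rho(v) = [\min(a_v^{(1)}, a_v^{(2)}), \text{index}_{<_V}(v)]$, which is convex. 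So $\rho$ maps each vertex to a genuine interval, and trivially $\text{right}_\rho(v_i) = i$, so whatever graph $\rho$ represents agrees on \totalv.

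The remaining task is to show $\rho$ represents exactly $G_1 \cup G_2$, i.e.\ for all $v \neq u$, $\rho(v) \cap \rho(u) \neq \varnothing \Leftrightarrow \rho_1(v)\cap\rho_1(u) \neq \varnothing \vee \rho_2(v)\cap\rho_2(u)\neq\varnothing$. The ``$\Leftarrow$'' direction is immediate: $\rho_t(v)\cap\rho_t(u)\neq\varnothing$ and $\rho_t(v)\subseteq\rho(v)$, $\rho_t(u)\subseteq\rho(u)$ force $\rho(v)\cap\rho(u)\neq\varnothing$. For the ``$\Rightarrow$'' direction, assume $v_j <_V v_i$ and $\rho(v_i)\cap\rho(v_j)\neq\varnothing$. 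Since both $\rho_1(v_i)$ and $\rho_2(v_i)$ have right endpoint $i > j = \text{right}(v_j)$, and $\rho(v_j)\subseteq [\,\cdot\,, j]$, any common point $x$ of $\rho(v_i)$ and $\rho(v_j)$ satisfies $x \le j$. This point $x$ lies in $\rho(v_i) = \rho_1(v_i)\cup\rho_2(v_i)$, so without loss of generality $x \in \rho_1(v_i)$; since $\rho_1(v_i)$ is convex and contains its right endpoint $i > j \ge x$, it contains the whole segment $[x, i]$, in particular it contains $j$. But $j = \text{right}_{\rho_1}(v_j) \in \rho_1(v_j)$, so $\rho_1(v_i) \cap \rho_1(v_j) \neq \varnothing$, which gives one of the two disjuncts. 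Hence $\{v_i,v_j\}\in E_1\cup E_2$.

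I expect the only subtlety — and the "main obstacle", such as it is — to be making the right-endpoint argument airtight: one must use that the normalized right endpoints are genuinely equal across the two representations (so that the right endpoint of $v_i$ strictly dominates every point of $\rho(v_j)$ whenever $j < i$), and that each $\rho_t(v_i)$ is a closed interval containing its right endpoint, so convexity propagates the intersection witness down to the integer point $j$. Everything else is bookkeeping identical in spirit to \cref{thm:opintersection}. Finally I would note that \cref{thm:opintersection} and \cref{thm:opunion} together imply, by induction on the number of layers, that the conflict graph of an order preserving temporal interval graph is itself an interval graph agreeing on the same RE ordering, which is the fact used in \cref{sec:OPPropeties}.
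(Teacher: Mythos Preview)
Your proposal is correct and follows essentially the same approach as the paper: normalize both representations so that the right endpoints coincide at the integer $\text{index}_{<_V}(v)$, define $\rho(v)=\rho_1(v)\cup\rho_2(v)$, observe it is again an interval with the same right endpoint, and then verify the biconditional using convexity and the fact that $j=\text{right}(v_j)$ always lies in $\rho_t(v_j)$. The only cosmetic difference is that the paper phrases the ``$\rho(v)$ is an interval'' step via the containment $\rho_1(v)\subseteq\rho_2(v)$ or vice versa, whereas you write the union explicitly as $[\min(a_v^{(1)},a_v^{(2)}),\text{index}_{<_V}(v)]$; your version is in fact slightly cleaner.
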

\begin{proof}

The main concept of the proof is analogous to the one for \cref{thm:opintersection}.
Given two interval graphs which agree on an RE ordering,
we can normalize their representations such that for each vertex, the right endpoints in both representations are the same.
To compute an intersection model for their \emph{union} graph,
we can map each vertex to the \emph{union} of their intervals in both representations.
We then show that this mapping is an interval representation of the edge-\emph{union} graph.

Let $\rho$ be a mapping from the vertex set $V$ to a set of points on $\RR$ such that it holds~${\rho(v)=\rho_1(v)\cup\rho_2(v)}$.
To show that $\rho$ is an interval representation of $G_1\cup G_2$ we first show that $\rho(v)$ is a continuous interval for any $v\in V$,
and that for any two vertices~$v,u\in V$ it holds that $\rho(v)\cap \rho(u) \neq \varnothing \Leftrightarrow \rho_1(v)\cap \rho_1(u)  \neq \varnothing \vee \rho_2(v)\cap \rho_2(u)  \neq \varnothing$.

By definition $\rho_1(v)$ and $\rho_2(v)$ are both closed and normalized intervals on the real line such that ${\text{right}_{\rho_1}(v_i)=\text{right}_{\rho_2}(v_i)=i}$.
As we observed in \cref{thm:opintersection}, since both intervals have the same right endpoint it holds that either~$\rho_1(v)\subseteq \rho_2(v)$ or $\rho_2(v)\subseteq \rho_1(v)$.
Without loss of generality, assume that~${\rho_2(v)\subseteq \rho_1(v)}$, it follows that $\rho(v)=\rho_1(v)=\rho_1(v)\cup\rho_2(v)$.

Suppose that $\rho(v_i)\cap \rho(v_j) \neq \varnothing$, but both~$\rho_1(v_i)\cap \rho_1(v_j)=\varnothing$ or~${\rho_2(v_i)\cap \rho_2(v_j) = \varnothing}$.
This contradicts that $v_j <_V v_i$ because if $\rho(v_i)$ contains any point $a\in\rho(v_j)$ with $a<j$, then it must contain also $j$ because $\rho(v_i)$ is convex.
If $j \in \rho(v_i)$ then trivially $j\in
\rho_1(v_i)$.

If $\rho(v_i)\cap \rho(v_j) = \varnothing$ but either~$\rho_1(v_i)\cap \rho_1(v_j) \neq\varnothing$ or~${\rho_2(v_i)\cap \rho_2(v_j)  \neq \varnothing}$, then it contradicts the fact that either~$\rho_1(v_i)\subseteq\rho_2(v_i)=\rho(v_i)$ or that~$\rho_2(v_i)\subseteq\rho_1(v_i)=\rho(v_i)$.

We have therefore an interval representation $\rho$ which represents the graph $G_1\cup G_2$ because~$\rho(v)\cap\rho(u)\Leftrightarrow \{v,u\}\in E_1 \vee \{v,u\}\in E_2$ for any $v,u\in V$. Notice that $G_1\cup G_2$ agrees on $<_V$ because $\text{right}_\rho(v_i)=i$.
\end{proof}

Using both \cref{thm:opunion,thm:opintersection} we arrive at the following corollary.
\begin{corollary}
\label{cor:opDist}
Let $\TG$ be an order-preserving temporal interval graph, and let $\Delta \geq 1$ be some integer. Then both the $\Delta$-association and the $\Delta$-conflict graph of $\TG$ are interval graphs.
\end{corollary}
Since both \textsc{Clique} and \is can be solved in linear time on interval graphs~\cite{gavril1974intersection,DBLP:journals/siamcomp/RoseTL76}, we get the main result of this subsection.
\begin{theorem}
\label{thm:oppolytime}
\TC and \TIS on order-preserving temporal interval graphs are both solvable in linear time.
\end{theorem}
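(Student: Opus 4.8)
The plan is to combine \cref{thm:opDist} with the classical linear-time algorithm for \is on interval graphs. Given a \TIS instance on an order preserving temporal interval graph $\TG = (V, \TE, \tau)$, the first step is to observe that the question is precisely whether the conflict graph $G = (V, \bigcup_{i=1}^{\tau-\Delta}\bigcap_{j=i}^{i+\Delta} E_j)$ has an independent set of size at least $k$. By \cref{thm:opDist}, since $\TG$ is order preserving, this conflict graph $G$ is an interval graph; moreover, inspecting the proofs of \cref{thm:opintersection,thm:opunion}, the construction not only shows that $G$ is an interval graph but actually produces an interval representation of it that agrees on the same RE ordering \totalv.

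The key steps, in order, are as follows. First, compute (or read off from the input, if it is already supplied) an RE ordering \totalv witnessing that $\TG$ is order preserving; by the result of Fluschnik \emph{et al.}~\cite{Nie2020temporal} this takes linear time for unit interval graphs, and for the purposes of this theorem we may assume such an ordering together with the layer interval representations is available. Second, for each sliding window $[i : i+\Delta]$, take the normalized interval representations $\rho_i, \ldots, \rho_{i+\Delta}$ of the layers $G_i, \ldots, G_{i+\Delta}$ and iterate the intersection construction of \cref{thm:opintersection} to obtain a representation of $\bigcap_{j=i}^{i+\Delta} G_j$; concretely, the intersection of intervals sharing a common right endpoint is again an interval with that right endpoint, so one simply takes, for each vertex $v$, the interval whose right endpoint is $\text{index}_{<_V}(v)$ and whose left endpoint is the maximum of the left endpoints over the window. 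Third, iterate the union construction of \cref{thm:opunion} over $i = 1, \ldots, \tau-\Delta$ to combine these windowed intersection graphs into a single interval representation $\rho$ of the conflict graph $G$; again the union of intervals sharing a right endpoint is the interval with that right endpoint and the minimum of the left endpoints. Fourth, run the standard greedy/DP algorithm for \is on interval graphs~\cite{gavril1974intersection,DBLP:journals/siamcomp/RoseTL76} on $\rho$ and compare the size of the returned maximum independent set with $k$.

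Correctness is immediate from \cref{thm:opDist} and the correctness of the interval-graph \is algorithm. For the running time, the main point to be careful about is not to pay a factor of $\tau$ or $\Delta$ per vertex naively. Since every intermediate representation shares the \emph{same} right endpoints (namely $\text{index}_{<_V}(v)$ for vertex $v$), each vertex's interval throughout the whole computation is determined solely by its left endpoint, and the conflict-graph left endpoint of $v$ is $\min_{i} \max_{i \le j \le i+\Delta} \text{left}_{\rho_j}(v)$, i.e.\ a sliding-window minimum of sliding-window maxima over the sequence of left endpoints $\text{left}_{\rho_1}(v), \ldots, \text{left}_{\rho_\tau}(v)$. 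Each of these two passes can be done in amortized linear time in $\tau$ per vertex using a monotonic-deque sliding-window extremum computation, so the conflict-graph representation is built in $\bigO(n\tau + |\TE|)$ time, which is linear in the input size; the final interval-graph \is call is then linear as well. The only genuine obstacle is this running-time bookkeeping — making sure the sliding-window construction is implemented in linear rather than $\bigO(n\tau\Delta)$ time — since the mathematical content is entirely carried by the already-proved \cref{thm:opintersection,thm:opunion}.
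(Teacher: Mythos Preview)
Your proposal is correct and follows the same route as the paper: the paper's argument is essentially the one-line observation that by \cref{thm:opDist} the conflict graph is an interval graph, whereupon the classical linear-time \is algorithm applies. You expand considerably on the implementation side (building the conflict-graph interval representation directly via the min-of-sliding-window-max formula for the left endpoints), which the paper does not spell out at all; this extra detail is sound and indeed necessary if one wants the claimed linear bound rather than something like $\bigO(n\tau\Delta)$ or $\bigO(n^2)$ from materializing the conflict graph's edge set.
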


\subsection{FPT-Algorithms for Vertex Deletion to Order Preservation}\label{sec:opvd}

Now we generalize \cref{thm:oppolytime} and show how to solve \TC and \TIS on \emph{almost} order-preserving temporal unit interval graphs, that is, graphs that most of their vertices agree on a common ordering.
To this end, we define a \emph{distance} of a temporal graph to order preservation.
This distance is measured by the size of the minimum vertex set that obstructs the compatibility of a total RE order of a temporal interval graph. We define it as follows and give an illustration in \cref{fig:opvd}.

\begin{definition}[\opvd]\label{def:opvd}
Let $\TG=(V,\TE,\tau)$ be a temporal interval graph.
A \emph{\OPVD} (\opvd) is a set of vertices~$V' \subseteq V$ such that $\TG-V'$ is  order-preserving.
\end{definition}

\begin{figure}[t]
  \centering
    \subfloat[\centering Graph representations of the two interval graphs]{
    {

          \begin{tikzpicture}[main/.style = {draw,fill=white, circle},scale=1.1]

          \draw[line width=0.7mm]  (4,4) to (5,2); 
          \draw[line width=0.7mm]  (5,2) to (4,0); 
          \draw[line width=0.7mm]  (2,0) to (4,0);
          \draw[line width=0.7mm]  (1,2) to (2,0);
          \draw[line width=0.7mm]  (0.9,2) to (1.9,4);

          \draw[line width=0.3mm]   (4.1,4) to (5.1,2);
        \draw[line width=0.3mm]   (4.1,4) to (2,0);
        \draw[line width=0.3mm]   (4.1,4) to (1,2);
          \draw[line width=0.3mm]   (5.1,2) to (4.1,0);
          \draw[line width=0.3mm]   (0.9,2) to (1.9,0);
          \draw[line width=0.3mm]   (1,2) to (2,4);

          \node[main] (1) at (4,4) {$v_1$};
          \node[main] (2) at (5,2) {$v_2$};
          \node[main] (3) at (4,0) {$v_3$};
          \node[main] (4) at (2,0) {$v_4$};
          \node[main] (5) at (1,2) {$v_5$};
          \node[main] (6) at (2,4) {$v_6$};

        \end{tikzpicture}

     }
    }%
    \qquad
    \subfloat[\centering Two intersection models of $G_1 -\{v_4\}$ and $G_2 -\{v_4\}$ which are compatible with~$<_{V'}$ ]{{

    \begin{tikzpicture}[scale=1.1]

      \draw[thick,->] (0,0) -- (6,0) node[anchor=north west] {};

    \foreach \x/\b/\r/\l in {1/0/0/3, 2/0.5/0.5/2, 3/1.5/1.5/1, 4/3.5/2.5/5, 5/3.5/3.5/6 }
        {
            \draw[line width=0.2mm, -|] (\x,\x/3-0.02) -- (\r,\x/3-0.02) node[anchor=north west] {};
            \draw[line width=0.5mm, -|] (\x,\x/3+0.04) -- (\b,\x/3+0.04) node[anchor= east] {};
            \filldraw (\x , \x/3) circle (2pt) ;
            \node at (\x+0.3 , \x/3-0.1) {$v_{\l}$};
        }

    \end{tikzpicture}

    }}%

    \caption{Two interval graphs, $G_1$ (thick) and $G_2$ (thin), that \emph{do not} have a common RE ordering.
    The vertex subset $\{ {v_4} \}$ is an OPVD set of the temporal graph $\TG = [G_1,G_2]$ as both~$G_1-\{ {v_4} \}$ and~$G_2-\{ {v_4} \}$ agree on $<_{V'} = [v_3,v_2,v_1,v_5,v_6]$.  Two compatible interval representations are illustrated in (b). }%
  \label{fig:opvd}%
\end{figure}
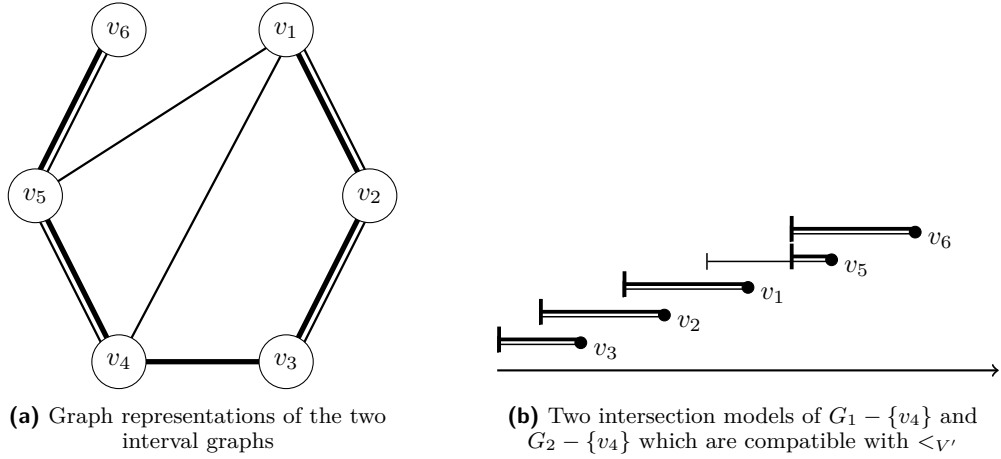

The size of the minimum \opvd set measures how many vertices obstruct a total RE order for a temporal interval graph.
We denote the cardinality of the minimum \opvd by $\ell$.
A brute-force algorithm checks every subset of the vertex set to find a solution to \TC and \TIS.
Given an $\ell$-sized \opvd set we can brute-force the power set of the \opvd (which has size $2^\ell$)
and then check against the rest of the order-preserving graph in polynomial time.

\begin{theorem}
\label{thm:fptopClique}
\TC can be decided in~$2^\ell\cdot n^{\bigO(1)}$ time when given a size-$\ell$ \opvd set of the input temporal graph.
\end{theorem}
\begin{proof}
  The idea is as follows. Given an order-preserving vertex deletion set~$S$ of size~$\ell$,
  we brute-force its power set.
  Let $G_\text{op}$ be the association graph of~$\TG-S$.
  The graph~$G_\text{op}$ is, by definition, an interval graph.
  For each subset $X$ of $S$ we compute in polynomial time~${G_X}$ as
  $ \big( \bigcap\limits_{v\in X} N_G(v) \setminus \{ u \mid \exists w \in X~,~\{u,w\}\notin E \}   \big) \setminus S$. In other words, ${G_X}$ is the neighborhood intersection of all vertices in~$X$ and contains only vertices that are adjacent to all vertices in~$X$ and none of the vertices of $S$, that makes it an interval graph.
  We then find the maximum clique $C$ on~${G_X}$, we do that in linear time since it is an interval graph.
  If~$|X\bigcup C|\geq k$ then we have a yes-instance,
  Otherwise it is a no-instance.
\end{proof}

We next show that the same approach works as well for \TIS on almost order-preserving temporal interval graph.

\begin{theorem}
\label{thm:fptop}
\TIS can be decided in~$2^\ell\cdot n^{\bigO(1)}$ time when given a size-$\ell$ \opvd set of the input temporal graph.
\end{theorem}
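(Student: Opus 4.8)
The plan is to combine the polynomial-time solvability of \TIS on order preserving temporal interval graphs (\cref{thm:oppolytime}) with an exhaustive branching over the \opvd set. Let $V'$ be the given size-$\ell$ \opvd set, so that $\TG - V'$ is order preserving. Observe that in any feasible solution $S \subseteq V$ with $|S| \geq k$, the set $S$ splits as $S = (S \cap V') \cup (S \setminus V')$. The key structural point is that $S$ is an independent set in the conflict graph $G$ if and only if (i) $S \cap V'$ is an independent set in $G$, and (ii) $S \setminus V'$ is an independent set in $G$, and (iii) there is no conflict edge between $S \cap V'$ and $S \setminus V'$. This suggests first guessing the intersection $A := S \cap V'$ of the solution with the \opvd set.

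The algorithm would proceed as follows. First, enumerate all $2^\ell$ subsets $A \subseteq V'$. For each $A$, check that $A$ is independent in the conflict graph $G$ of $\TG$; the conflict graph can be computed in polynomial time from $\TG$ (it is the edge-union over all windows of the edge-intersection over each window), so this check is polynomial. Discard any $A$ that is not independent. Next, for a surviving $A$, let $N(A)$ denote the set of vertices in $V \setminus V'$ that are adjacent in $G$ to some vertex of $A$; these vertices cannot be added to the solution alongside $A$. Form the temporal graph $\TG_A := \TG - (V' \cup N(A))$ — equivalently, restrict to the vertex set $(V \setminus V') \setminus N(A)$. Since $\TG - V'$ is order preserving and order preservation is clearly hereditary under vertex deletion (deleting vertices only removes constraints on the common RE ordering; formally, a witnessing ordering $<_V$ for $\TG - V'$ restricts to a witnessing ordering for any induced temporal subgraph), $\TG_A$ is also order preserving. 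Hence by \cref{thm:oppolytime} we can compute in linear time a maximum independent set $B$ of the conflict graph of $\TG_A$. Then $A \cup B$ is a valid $\Delta$-independent set in $\TG$: $A$ is independent in $G$, $B$ is independent in $G$ restricted to $\TG_A$'s vertices (and the conflict graph of an induced temporal subgraph is the induced subgraph of the conflict graph, so $B$ is independent in $G$), and no vertex of $B$ is adjacent to any vertex of $A$ by the definition of $N(A)$. We return "yes" iff for some choice of $A$ we have $|A| + |B| \geq k$.

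For correctness, one direction is immediate from the construction above: every candidate $A \cup B$ produced is a feasible solution. For the other direction, suppose $S$ is a feasible solution with $|S| \geq k$. Take $A := S \cap V'$. Then $A$ is independent in $G$, so it survives the filtering step. Moreover $S \setminus V'$ is disjoint from $N(A)$ (otherwise there would be a conflict edge inside $S$), so $S \setminus V'$ is an independent set in the conflict graph of $\TG_A$; hence the maximum independent set $B$ computed for this $A$ satisfies $|B| \geq |S \setminus V'|$, and therefore $|A| + |B| \geq |A| + |S \setminus V'| = |S| \geq k$. So the algorithm answers "yes". The running time is $2^\ell$ iterations, each costing polynomial time (computing the conflict graph, the neighborhood $N(A)$, and running the linear-time interval \is routine), giving the claimed $2^\ell \cdot n^{\bigO(1)}$ bound.

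The only genuinely non-routine point is verifying that deleting the vertex set $V' \cup N(A)$ preserves order preservation and that the conflict-graph of an induced temporal subgraph is the induced subgraph of the conflict graph; both follow directly from the definitions — order preservation asks for a single common RE ordering of all layers, which is trivially inherited by induced subgraphs, and the window operations defining the conflict graph commute with vertex deletion — so I expect this to be a short paragraph rather than a real obstacle. Everything else is bookkeeping around the branching. One could alternatively phrase the whole thing as "brute-force the power set of the \opvd set and check against the rest of the order preserving graph in polynomial time," which is exactly the informal sentence preceding the theorem statement; the plan above simply makes precise what "check against the rest" means, namely solving \is on the order-preserving remainder via \cref{thm:oppolytime} after removing the conflict-neighborhood of the guessed part.
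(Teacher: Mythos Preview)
Your proposal is correct and follows essentially the same approach as the paper: enumerate the $2^\ell$ subsets of the \opvd set, delete each subset's conflict-neighborhood from the order-preserving remainder, and solve \is on the resulting interval conflict graph via \cref{thm:oppolytime}. If anything, your write-up is more careful than the paper's, since you explicitly verify that order preservation is hereditary under vertex deletion and that taking the conflict graph commutes with passing to an induced temporal subgraph---points the paper leaves implicit.
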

\begin{proof}
  The idea is as follows. Given an order-preserving vertex deletion set~$S$ of size $\ell$,
  we brute-force its power set.
  Let $G_\text{op}$ be the conflict graph of~$\TG-S$.
  The graph $G_\text{op}$ is, by definition, an interval graph.
  For each subset $X$ of $S$ we compute~${G_X}$ as ${G_\text{op}-N_{G_\text{op}}(X)}$,
  the neighbors of $X$ from the conflict interval graph.
  As $G_X$ is an interval graph, we can compute a maximum independent set $V'$ of $G_X$ in linear time,
  then check in quadratic time whether $X\cup V'$ is an independent set of size $k$ in the conflict graph of~$\TG$.
  If $X\cup V'$ is an independent set of size at least $k$, then we have a yes-instance.

  Any independent set of size $k$ must clearly be divisible into two subsets, a subset of~$X \subseteq S$ (that includes the trivial subset) and a subset of $V\setminus S$.
  Any independent set on $\TG$ must be also an independent set on the subgraph induced by $V\setminus S$.
  If we exhaust all of the subsets of~$S$ and do not find an independent set of size at least $k-|X|$ on~$\TG-(S \cup N_{G_\text{op}}[X])$ for~$X \subseteq S$,
  then we can conclude that such set does not exist.
  In such case the instance is a no-instance.
  The power set of $S$ is of size $2^\ell$, which means it takes~$2^\ell\cdot n^{\bigO(1)}$ time to exhaust all subsets of~$S$.
\end{proof}

Since the \fpt~algorithm for \TC and \TIS
parameterized by the minimum \opvd $\ell$ behind
\cref{thm:fptop} and \cref{thm:fptopClique} requires access to an $\ell$-sized \opvd set, we present an \fpt-algorithm to compute a minimum \opvd for a given temporal unit interval graph.
We do this by providing a reduction to the so-called \textsc{Consecutive Ones Submatrix by Column Deletions} problem, for which efficient algorithms are known~\cite{dom2010approximation,narayanaswamy2015obtaining}.

 Before we describe the reduction, we give an alternative characterization of order-preserving temporal unit interval graphs. We will use this characterization in our \fpt-algorithm to compute a minimum \opvd.
  As we show in the next lemma, a temporal unit interval graph~$\TG$ is order-preserving if and only if its vertices vs.\ maximal cliques matrix has the so-called \emph{consecutive ones property}\footnote{A 0-1-matrix has the \emph{consecutive ones property} if there exists a permutation of the columns such that in each row all ones appear consecutively.} (C1P). Note that it is known that the vertices vs.\ neighborhoods matrix also has the consecutive ones property in this case~\cite{Nie2020temporal}.

\begin{lemma}
\label{thm:recognizeoppig}
  A temporal unit interval graph is order-preserving if and only if its \emph{vertices vs.\ maximal cliques matrix} has the consecutive ones property.
\end{lemma}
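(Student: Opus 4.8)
The statement is an iff, so I would prove the two directions separately, with the forward direction being the easier one and the backward direction carrying most of the work. Throughout I would use the alternative characterization that a temporal unit interval graph $\TG$ is order preserving iff all its layers agree on a single RE ordering (\cref{def:orderpreserving}), and the classical fact (Fulkerson--Gross) that an interval graph's vertices-vs.-maximal-cliques matrix has the consecutive ones property, where moreover a column permutation witnessing C1P corresponds exactly to a linear arrangement of the maximal cliques that appears along some interval representation. For a \emph{unit} (equivalently proper) interval graph, such a clique arrangement induces a natural vertex order: order the vertices by the first clique (in the arrangement) that contains them, breaking ties by the last clique that contains them; this is precisely an RE ordering of the proper interval representation, and conversely an RE ordering determines the left-to-right clique arrangement.

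\textbf{Forward direction.} Suppose $\TG = (V, \TE, \tau)$ is order preserving, so all layers $G_1, \dots, G_\tau$ agree on a common RE ordering $<_V$. For each layer $G_t$, take the proper interval representation $\rho_t$ realizing $<_V$; scanning this representation left to right lists the maximal cliques of $G_t$ in an order $\sigma_t$ consistent with $<_V$ in the above sense. The key claim is that one can choose a \emph{single} column order on the (global) set of maximal cliques of \emph{all} layers that simultaneously witnesses C1P for every layer. Here I would form the vertices-vs.-maximal-cliques matrix $M$ whose columns are the maximal cliques of $G_1, \dots, G_\tau$ (a clique $C$ of $G_t$ contributes a column with ones exactly on $C$), grouped by layer, and within the block for layer $t$ ordered by $\sigma_t$; across blocks, order layers $1, \dots, \tau$. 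Because $<_V$ is a common RE ordering, within each layer block the ones of each row are consecutive, and I would argue that concatenating blocks keeps this property: a vertex $v$ appears in a contiguous run of cliques in each layer (those between its leftmost and rightmost containing clique under $\sigma_t$), and since all these runs are "anchored" consistently to $<_V$, the union over all layer blocks, read in block order, is still one consecutive run in each row. (More carefully: a vertex's set of containing cliques in layer $t$ is an interval of $\sigma_t$, and every column that is not a clique containing $v$ can be placed outside that interval; doing this uniformly via $<_V$ across all layers gives global consecutiveness. One must also handle maximal cliques shared as sets between layers, but duplicates only help.)

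\textbf{Backward direction — the main obstacle.} Suppose the vertices-vs.-maximal-cliques matrix $M$ of $\TG$ (columns = maximal cliques of all layers) has C1P, witnessed by a column permutation $\pi$. I need to produce a single RE ordering on which all layers agree. Restricting $\pi$ to the columns belonging to layer $G_t$ gives a column permutation witnessing C1P for the vertices-vs.-maximal-cliques matrix of $G_t$ alone; hence $G_t$ is an interval graph and, reading cliques in the order induced by $\pi$, I get an interval representation of $G_t$. The crux is that the vertex order extracted from this clique arrangement is the \emph{same} $<_V$ for every $t$: define $<_V$ by ordering $u$ before $v$ if the first $\pi$-position of a maximal clique (over all layers) containing $u$ precedes that of $v$, with ties broken by last position. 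Because $\pi$ is global, this $<_V$ is well-defined and layer-independent, and I would verify that within each layer it is a bona fide RE ordering of the proper interval representation obtained from $\pi$ — this uses the unit/proper interval structure, namely that in a proper interval graph the left-to-right clique order determines and is determined by the right-endpoint order. The delicate point, and what I expect to be the main obstacle, is exactly this last verification: ensuring that the tie-breaking and the interleaving of cliques from \emph{different} layers inside the single permutation $\pi$ does not corrupt the per-layer RE property — in particular that two vertices nonadjacent in $G_t$ really do get separated right-endpoints consistent with $<_V$, which requires that $\pi$ restricted to layer $t$'s cliques still has the "each maximal clique of $G_t$ is the set of vertices whose clique-intervals contain its position" normal form. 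I would discharge this by invoking the tight correspondence, for proper interval graphs, between C1P orderings of the clique matrix and proper interval representations (so that any column order realizing C1P of $G_t$ is realized by some proper representation), and then showing that all these per-layer representations can be taken with right endpoints placed at the $\pi$-positions of the cliques, which forces a common $<_V$.

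\textbf{Remark.} An alternative, possibly cleaner route for both directions is to go through the known fact (cited from Fluschnik \emph{et al.}~\cite{Nie2020temporal}) that order preservation of a temporal unit interval graph is equivalent to C1P of the vertices-vs.-\emph{neighborhoods} matrix, and then show that for the temporal union of the layers the vertices-vs.-maximal-cliques matrix and the vertices-vs.-closed-neighborhoods matrix have C1P under the \emph{same} set of column/vertex orderings — a statement that for a single proper interval graph is classical and extends layerwise. I would present whichever of these two routes yields the shorter verification of the per-layer RE property.
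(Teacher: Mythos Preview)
Your plan rests on a different orientation of the matrix than the one the paper actually uses. In the paper's proof --- and this is forced by the downstream \cref{thm:fpt:opvd}, where column deletions must correspond to \emph{vertex} deletions --- the columns of $M$ are the vertices of $\TG$ and the rows are the maximal cliques of all layers; C1P then asks for a single vertex permutation $<_V$ under which every maximal clique of every layer is a consecutive block of vertices. You instead take the columns to be the maximal cliques (the classical Fulkerson--Gross orientation) and seek a global linear arrangement of all cliques across all layers. With the paper's orientation the forward direction is almost immediate: given a common RE ordering $<_V$, in any compatible unit-interval representation of a layer every maximal clique is a set of pairwise-intersecting unit intervals and hence a $<_V$-interval of vertices (the paper dispatches this by a short contradiction using the unit-length constraint). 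For the backward direction the paper simply takes the witnessing vertex permutation $<_V$ and, for each layer separately, builds an interval representation with right endpoint $\text{index}_{<_V}(v)$ for every $v$; no cross-layer interleaving of cliques is ever needed.

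Even under your orientation, the concatenation step in your forward direction is genuinely broken, not merely sketchy. Take $V=\{a,b,c\}$, layer~$1$ with the single edge $\{a,b\}$ (maximal cliques $\{a,b\}$ and $\{c\}$) and layer~$2$ with the single edge $\{b,c\}$ (maximal cliques $\{a\}$ and $\{b,c\}$); this temporal unit interval graph is order preserving via $a<_V b<_V c$. Your per-layer clique orders are $\sigma_1=(\{a,b\},\{c\})$ and $\sigma_2=(\{a\},\{b,c\})$, and concatenating them gives the row of $a$ the pattern $1,0,1,0$, which is not consecutive. A global clique order with C1P does exist here, namely $(\{a\},\{a,b\},\{b,c\},\{c\})$, but obtaining it requires \emph{interleaving} cliques from different layers; the phrase ``anchored consistently to $<_V$'' does not justify consecutiveness across block boundaries. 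Your backward direction inherits the same difficulty of controlling how cliques from different layers sit inside the single permutation $\pi$, and the first-clique/last-clique tie-breaking rule you propose for extracting $<_V$ from $\pi$ has no evident reason to produce an RE ordering for every layer simultaneously.
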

\begin{proof}
  Testing for the consecutive ones property for a matrix can be done in linear time~\cite{tucker1972structure}.
  To test a temporal unit interval graph for order preservation we compute its vertices vs.\ maximal cliques matrix and test it for the consecutive ones property.
  We say that $\TG$'s set of maximal cliques $\mathcal{C}$ is the union of sets of maximal cliques of each layer of $\TG$.
  The vertices vs.\ maximal cliques matrix $M$ is a binary matrix in which~$M_{i,j}=1$ if and only if the vertex~$v_i \in V$ is a member of $C_j \in \mathcal{C}$.
  It is left to show that a temporal unit interval graph~$\TG$ is order-preserving if and only if its vertices vs.\ maximal cliques matrix has the consecutive ones property.

  $(\Rightarrow)$ If $\TG$ is order-preserving, then there exists an ordering $<_V$ such that every layer has an interval representation $\rho$ in which the right endpoints of all intervals agree on $<_V$.
  Let $M$'s columns be ordered by $<_V$.
  If $M$ is not in its petrie form~\footnote{A 0-1-matrix is in its \emph{petrie form} (if it has one) if the columns are permuted in a way such that the ones appear consecutively in all rows.}, then it must mean that there exists a clique $C$ in $\mathcal{C}$ whose members are not consecutive in $<_V$.
  In other words, there exist $u,v,w\in V$ such that $u <_V w <_V v$, for which~$u,v\in C$ and $w \notin C$.
  Since $u$ and $v$ are adjacent and $u <_V v$, we know that left$(v) < \text{right}(u)$.
  We know also that the length of $\rho(v)$ is exactly~1.
  This definitely means that $v$ and $w$ intersect because $\text{right}(w) \in [\text{right}(u), \text{right}(v)]$.
  However since $w \notin C$, $w$ and $u$ cannot be adjacent.
  This is a contradiction since  $\text{right}(v) - \text{right}(u) < 1$ and $\text{right}(w) - \text{left}(w) = 1$.
  If~$\TG$ is order-preserving, then $M$ must have the consecutive ones property.

  $(\Leftarrow)$ If $M$ has the consecutive ones property, then there exists an ordering $<_V$ so that the vertices vs.\ maximal cliques matrix $M_t$ of every layer $G_t \in \TG$ is in its petrie form, when its columns are permuted according to $<_V$.
  Let $\mathscr{I}_t(v)$ be the union of all maximal cliques in layer $G_t$ which contain $v$.
  We know that for every $v \in V$ the vertices of $\mathscr{I}_t(v)$ are consecutive in $<_V$ \cite{booth1976testing}.
  Let $\text{index}(v)$ be the index of $v$ in  $<_V$ and let~${\rho_{G_t}(v) = [\text{min}\{\text{index}(u) \mid u \in \mathscr{I}_t(v)\}-1+\text{index}(v)\cdot\varepsilon , \text{index}(v)  ]}$, for some $0<\varepsilon < 1/|V|$. First, note that no two intervals are contained in each other. This means that there is an equivalent interval representation where all intervals have unit length~\cite{roberts1969indifference}.

  By showing ${\{u,v\}\in E_t \Leftrightarrow \rho_t(v) \cap \rho_t(u) \neq \varnothing}$ we effectively show that $\rho_{G_t}$ is an interval representation of $G_t$.
  If $\{ u,v \} \in E_t$, then there must exist a maximal clique~$C$ so that~$u,v \in C$ and thus $u \in \mathscr{I}_t(v)$ and $v \in \mathscr{I}_t(u)$.
  Assume $u <_V v$, then $\text{left}_{\rho_{G_t}}(v) \leq \text{right}_{\rho_{G_t}}(u)$ and by that~$\rho_{G_t}(u) \cap \rho_{G_t}(v) \neq \varnothing$.
  If $\{ u,v \} \notin E_t$ then there is no clique~$C$ so that $u,v \in C$.
  Assume that $u <_V v$, then~${\rho_{G_t}(u) \cap \rho_{G_t}(v) \neq \varnothing}$ if and only if $\text{left}_{\rho_{G_t}}(v) \leq \text{right}_{\rho_{G_t}}(u)$.
  This cannot be because $\text{left}_{\rho_{G_t}}(v)$ is exactly the right endpoint of $v$'s lowest neighbors in $<_V$. If $\text{right}_{\rho_{G_t}}(u)$ is right of $v$'s lowest neighbor's right endpoint, then the vertices of $\mathscr{I}_t(v)$ are not consecutive in $<_V$.
  This contradicts the fact that $M$ has the consecutive ones property.
\end{proof}

  Since there is a bijection between the columns of the vertices vs.\ maximal cliques matrix~$M$ of $\TG$ and $\TG$'s vertices,
  we can use $M$ as input for the \textsc{Consecutive Ones Submatrix by Column Deletions} problem and apply existing algorithms for that problem~\cite{dom2010approximation,narayanaswamy2015obtaining} to obtain a vertex-maximal temporal subgraph of $\TG$ that is order-preserving. This allows us to obtain the following result.

\begin{theorem}\label{thm:fpt:opvd}
  A minimum \opvd for a given temporal unit interval graph can be computed in $10^\ell n^{\bigO(1)}$ time, where $\ell$ is the size of a minimum \opvd.
\end{theorem}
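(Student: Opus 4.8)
The plan is to reduce the computation of a minimum \opvd to the \textsc{Consecutive Ones Submatrix by Column Deletions} problem and invoke the known fixed-parameter algorithm running in $10^k n^{\bigO(1)}$ time of Narayanaswamy and Subashini~\cite{narayanaswamy2015obtaining} (equivalently the algorithms of Dom \emph{et al.}~\cite{dom2010approximation}). First I would take the input temporal unit interval graph $\TG=(V,\TE,\tau)$ and construct its vertices vs.\ maximal cliques matrix $M$: the rows are indexed by the maximal cliques in $\mathcal{C}=\bigcup_t \mathcal{C}(G_t)$, the columns are indexed by $V$, and $M_{C,v}=1$ iff $v\in C$. Computing all maximal cliques of each unit interval layer and hence $M$ takes polynomial time since each layer is an interval graph. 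By \cref{thm:recognizeoppig}, $\TG$ is order preserving if and only if $M$ has the consecutive ones property (for columns).

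Next I would establish the precise correspondence needed for the reduction. Deleting a vertex $v$ from $\TG$ deletes the corresponding column of $M$; after deleting the column set $V'$, the remaining maximal cliques of $\TG-V'$ are exactly the \emph{distinct} restrictions of the cliques in $\mathcal{C}$ to $V\setminus V'$ (some rows may become equal or strictly contained in others, but identical/contained rows do not affect the C1P, and restricting can only turn maximal cliques of $\TG$ into subsets of maximal cliques of $\TG-V'$). The key observation is: $M$ restricted to the columns $V\setminus V'$ has the consecutive ones property if and only if $\TG-V'$ is order preserving. The forward direction is immediate from \cref{thm:recognizeoppig} applied to $\TG-V'$ together with the fact that a consecutive-ones ordering of the columns of $M[\cdot,V\setminus V']$ gives a common RE ordering of the layers of $\TG-V'$; for the reverse direction one uses that an RE ordering of $\TG-V'$, extended arbitrarily, witnesses C1P for the submatrix because the support of every row of $M[\cdot,V\setminus V']$ is a clique in some layer $G_t-V'$ and hence consecutive. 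Thus a minimum \opvd of $\TG$ is exactly a minimum set of columns whose deletion makes $M$ have the C1P, which is precisely an optimal solution to \textsc{Consecutive Ones Submatrix by Column Deletions} on instance $M$.

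Finally I would feed $M$ to the algorithm of~\cite{narayanaswamy2015obtaining}, which decides in $10^k\cdot (nm)^{\bigO(1)}$ time whether $k$ columns can be removed to obtain the C1P and, by standard self-reduction or because the algorithm is constructive, outputs such a set; iterating over $k=0,1,2,\dots,\ell$ (or running the optimization version directly) yields a minimum \opvd set, and since $m=|\mathcal{C}|$ is polynomial in $n$ the total running time is $10^\ell n^{\bigO(1)}$. The main obstacle is the middle step: carefully checking that restricting the rows of $M$ to a column subset does not create spurious maximal cliques that would break the equivalence with order preservation of $\TG-V'$ — i.e.\ that the C1P of the column-restricted matrix is genuinely equivalent to order preservation of the induced temporal graph, rather than merely necessary or sufficient. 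This needs a short argument that every row of the restricted matrix is still the vertex set of a clique in some layer of $\TG-V'$ (so its consecutiveness is what \cref{thm:recognizeoppig} demands), and that conversely the maximal cliques of $\TG-V'$ are contained in rows of the restricted matrix, so that C1P of the rows we kept already forces the consecutiveness of all maximal cliques of $\TG-V'$.
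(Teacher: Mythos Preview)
Your proposal is correct and follows essentially the same approach as the paper: build the vertices vs.\ maximal cliques matrix, invoke \cref{thm:recognizeoppig}, and reduce to \textsc{Consecutive Ones Submatrix by Column Deletions} solved in $10^\ell n^{\bigO(1)}$ time via~\cite{dom2010approximation,narayanaswamy2015obtaining}. If anything, you are more careful than the paper about the one genuinely delicate point---that deleting columns of $M$ corresponds exactly to order preservation of $\TG-V'$ even though the maximal cliques of $\TG-V'$ need not coincide with the row supports of the restricted matrix---which the paper's proof asserts without further comment.
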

\begin{proof}

    In \cref{thm:recognizeoppig} we have shown that a temporal unit interval graph is order-preserving if and only if its vertices vs.\ maximal cliques matrix has the consecutive ones property. We provide a reduction to the \textsc{Consecutive Ones Submatrix by Column Deletions} problem.
        Formally, in \textsc{Consecutive Ones Submatrix by Column Deletions} we are given a binary matrix $M \in \{0,1\}^{m\times n}$ and are asked whether there exists a submatrix $M'$ with the consecutive ones property, such that $M'$ is obtained with not more than $\ell$ column deletions from $M$.
    \textsc{Consecutive Ones Submatrix by Column Deletions} is known to be \FPT~with respect to the column deletion set size and it can be decided in~$10^\ell n^{\bigO(1)}$ time~\cite{dom2010approximation,narayanaswamy2015obtaining}.

  Let~$\TG=(V,\TE)$ be a temporal unit interval graph with~$\mathcal{C}$ as maximal cliques set.
  Let $M$ be the vertices vs.\ maximal cliques matrix of $\TG$.
  If $M$ does not have the consecutive ones property,
  then we can find a set of $\ell$ columns in $10^\ell n^{\bigO(1)}$ time so that
  when deleted from~$M$, the resulting matrix $M'$ has the consecutive ones property.
  The columns of $M$ are mapped to vertices of $V$,
  the image of the deleted columns $V'$ is the \opvd set.
  We can find in linear time an ordering $<_V'$ of $M'$ columns such that $M'$ is in its petrie form.
  All layers of the graph $\TG-V'$ agree on $<_V'$.
\end{proof}

This provides us with an efficient algorithm for both \TIS and \TC on ``almost'' ordered temporal unit interval graph.
Namely, find in $10^\ell n^{\bigO(1)}$ time a minimum \opvd set in the input temporal unit interval graph using \cref{thm:fpt:opvd},
then decide in $2^\ell n^{\bigO(1)}$ time if we have a yes-instance of \TC or \TIS using \cref{thm:fptopClique} or \cref{thm:fptop} respectively. Overall, we arrive at the following result.

\begin{corollary}
\label{cor:fptop}

\TC and \TIS can both be decided in~$10^\ell\cdot n^{\bigO(1)}$ time if the input temporal graph is a temporal unit interval graph, where~$\ell$ is the size of a minimum \opvd of the input temporal graph.

\end{corollary}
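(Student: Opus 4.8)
The plan is to simply chain the two preceding algorithmic results, so the proof is a short pipeline argument. First I would invoke \cref{thm:fpt:opvd}: given the input temporal unit interval graph $\TG$, we compute a minimum \opvd set $S$ of size $\ell$ in $10^\ell \cdot n^{\bigO(1)}$ time. This step is what forces the restriction to temporal \emph{unit} interval graphs (as opposed to general temporal interval graphs), since the reduction to \textsc{Consecutive Ones Submatrix by Column Deletions} goes through the vertices vs.\ maximal cliques matrix characterization of \cref{thm:recognizeoppig}, which is only established for the unit case.

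Next I would feed $S$ into the algorithm behind \cref{thm:fptop}: armed with a size-$\ell$ \opvd set, we decide \TIS in $2^\ell \cdot n^{\bigO(1)}$ time by brute-forcing the $2^\ell$ subsets $X \subseteq S$, and for each one computing a maximum independent set of the interval conflict graph $G_\text{op} - N_{G_\text{op}}[X]$ in linear time (using \cref{thm:opDist}, which guarantees this conflict graph is an interval graph, together with the linear-time algorithm for \is on interval graphs), then checking whether $X$ together with that independent set forms an independent set of size at least $k$ in the full conflict graph of $\TG$.

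Finally I would observe that the total running time is $10^\ell \cdot n^{\bigO(1)} + 2^\ell \cdot n^{\bigO(1)}$, which is dominated by the first term and hence bounded by $10^\ell \cdot n^{\bigO(1)}$; correctness is immediate from the correctness of \cref{thm:fpt:opvd} and \cref{thm:fptop}. There is essentially no obstacle here — the only thing to be careful about is making explicit that the \opvd set produced by \cref{thm:fpt:opvd} is exactly the kind of set that \cref{thm:fptop} requires as input, and that $\ell$ denotes the same quantity (the minimum \opvd cardinality) in both places, so the composition is well-defined. Since both cited theorems are already proved in the excerpt, the proof of the corollary is just this two-line composition with the running-time bookkeeping.
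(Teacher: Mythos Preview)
Your proposal is correct and matches the paper's own argument exactly: the paper simply states that one first computes a minimum \opvd set in $10^\ell n^{\bigO(1)}$ time via \cref{thm:fpt:opvd}, then decides \TIS in $2^\ell n^{\bigO(1)}$ time via \cref{thm:fptop}, giving the claimed bound. Your additional remarks (why the unit-interval restriction is needed, and the unpacking of the brute-force step) are accurate elaborations but not required for the corollary itself.
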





\subsection{NP-Hardness of Vertex Deletion to Order Preservation}
\label{sec:opvdhardness}

Finally, we show that computing a minimum \opvd for a given temporal unit interval graph is \NP-hard. This complements \cref{thm:fpt:opvd} as it implies that we presumably cannot improve \cref{thm:fpt:opvd} to a polynomial-time algorithm.

\begin{theorem}\label{thm:np:opvdPIG}
 Computing a minimum \opvd for a given temporal unit interval graph is \NP-hard.
\end{theorem}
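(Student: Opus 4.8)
The plan is to reduce from a known \NP-hard problem that naturally corresponds to violating the consecutive ones property, exploiting the characterization from \cref{thm:recognizeoppig}. A temporal unit interval graph is order preserving if and only if its vertices vs.\ maximal cliques matrix has the consecutive ones property, and an \opvd corresponds exactly to a set of columns whose deletion yields the C1P. Hence the natural source problem is \textsc{Consecutive Ones Submatrix by Column Deletions}, which is known to be \NP-hard. However, an arbitrary binary matrix need not be realizable as a vertices vs.\ maximal cliques matrix of a temporal unit interval graph, so the technical heart of the reduction is a gadget construction: given a target binary matrix $M$, build a temporal unit interval graph $\TG$ whose vertices vs.\ maximal cliques matrix is $M$ (or $M$ padded in a controlled way), so that column deletions on $M$ correspond to vertex deletions on $\TG$.

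First I would recall that for a single layer, a unit interval graph $G_t$ together with an \emph{interval representation} determines a clique ordering: the maximal cliques of $G_t$ can be linearly ordered so that each vertex occupies a consecutive block, and conversely any such ``interval bigraph''-like incidence structure where every row is consecutive in \emph{some} fixed column order is realizable by a unit interval graph. The key point is that within one layer the incidence matrix always has the C1P (single interval graphs are order preserving trivially); the obstruction only arises because different layers force \emph{conflicting} column orders. So I would design, for each row $r$ of the source matrix $M$, a layer $G_r$ of $\TG$ in which the unique nontrivial maximal clique is exactly the support of row $r$ — realized as all those vertices sharing a common point — while all other vertices are isolated (far apart on the line) in that layer. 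Then the set $\mathcal{C}$ of maximal cliques of $\TG$ is, up to trivial singleton cliques, in bijection with the rows; the vertices vs.\ maximal cliques matrix is $M$ together with an identity block for the singletons. One then checks that the singleton columns never obstruct the C1P and can always be slotted in, so the minimum \opvd of $\TG$ equals the minimum number of columns to delete from $M$ to obtain the C1P.

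The remaining steps are: (i) verify that each gadget layer is genuinely a \emph{unit} interval graph — placing the clique vertices on a tiny common sub-interval and spreading the rest out works, but one must be careful that "being a member of no nontrivial clique" is consistent with unit lengths and that the maximal cliques of the layer are exactly what we claim (no spurious larger cliques appear); (ii) argue the reduction is polynomial, i.e.\ the number of layers $\tau$ equals the number of rows $m$ and $|V|$ equals the number of columns $n$, both polynomial in the size of $M$; (iii) prove the equivalence of optima in both directions using \cref{thm:recognizeoppig}: deleting a column set $D$ from $M$ yields the C1P iff deleting the corresponding vertex set from $\TG$ yields a temporal unit interval graph whose (reduced) vertices vs.\ maximal cliques matrix has the C1P, which by \cref{thm:recognizeoppig} holds iff $\TG-D$ is order preserving.

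I expect the main obstacle to be step (i) combined with controlling the maximal clique set: one must ensure that in layer $G_r$ no unintended maximal clique is created (for instance if two rows have nested supports, care is needed so that the clique structure across layers does not merge), and that restricting attention to the \emph{vertices} as the columns — rather than arbitrary column multiplicities — is without loss of generality for the hardness of \textsc{Consecutive Ones Submatrix by Column Deletions}. A clean way around this is to start the reduction not from the general problem but from a restricted variant already known to be \NP-hard on matrices with bounded row/column weights (or to use the standard reduction from \textsc{Hamiltonian Path}-type structures underlying C1P-deletion hardness), so that the gadget layers are simple paths with one marked clique and the realizability as unit interval graphs is transparent.
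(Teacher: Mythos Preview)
Your approach is correct but takes a genuinely different route from the paper. The paper does \emph{not} go through \cref{thm:recognizeoppig}; instead it gives a direct reduction from \textsc{Longest Common Subsequence in Permutations}. Each permutation $r_t$ becomes a layer, the alphabet letters become vertices $V_\Sigma$, and the order of $V_\Sigma$ in layer~$t$ is rigidly enforced by sandwiching the $V_\Sigma$-clique between two staircases of $n$ ``fixating'' cliques, each of size~$n$, so that no deletion budget below~$n$ can touch them. A common subsequence of length~$\ell$ then corresponds exactly to an \opvd of size~$n-\ell$. Your reduction from \textsc{Consecutive Ones Submatrix by Column Deletions} is more modular: it reuses the characterization of \cref{thm:recognizeoppig} as a black box and needs only the trivial gadget ``one clique on the row-support plus isolated vertices per layer,'' whereas the paper's gadget is heavier but self-contained (it does not rely on the external \NP-hardness of the C1P deletion problem). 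Each approach buys something: yours makes the equivalence between \opvd and C1P column deletion explicit in both directions (the paper only uses the forward direction for the \FPT\ algorithm), while the paper's keeps~$\tau$ tied to the number of permutations rather than the number of matrix rows.

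Two remarks on your write-up. First, your anticipated obstacles are largely phantom: a clique together with isolated vertices is a unit interval graph with exactly the intended maximal cliques, and ``nested supports across layers'' cause no merging because $\mathcal{C}$ is a union of \emph{per-layer} maximal cliques, not maximal cliques of any union graph. The singleton rows you add (not ``singleton columns'') are harmless for C1P in both directions, and deleting a vertex~$v$ in your $\TG$ corresponds exactly to deleting column~$v$ in $M$ up to removing empty and duplicate rows, which again does not affect C1P. Second, you do need to cite the \NP-hardness of \textsc{Consecutive Ones Submatrix by Column Deletions} explicitly; the paper only invokes its \FPT\ status, so make sure to source the hardness (e.g.\ the same Dom et al.\ reference establishes it).
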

\begin{proof}
To show \NP-hardness, we present a polynomial time many-one reduction from the
\npcomp~\textsc{Consecutive Ones Submatrix by Column Deletions} problem~\cite{hajiaghayi2002note} to the problem of computing an \opvd of size at most $\ell$ for a given temporal unit interval graph. Note that this implies \NP-hardness of the optimization problem of finding a minimum \opvd.

Formally, in \textsc{Consecutive Ones Submatrix by Column Deletions} we are given a binary matrix $M \in \{0,1\}^{m\times n}$ and are asked whether there exists a submatrix $M'$ with the consecutive ones property, such that $M'$ is obtained with not more than $\ell$ column deletions from $M$. Note that we can assume w.l.o.g.\ that there are at least two ones in each row of $M$, otherwise we can delete the row since its ones are consecutive for all permutations of the columns.

Our reduction works as follows. Given a binary matrix $M \in \{0,1\}^{m\times n}$ with $m$ rows and $n$ columns, we create a temporal graph $\mathcal{G}$ with $n$ vertices $V=\{1,\ldots, n\}$, one for each column, and $m$ layers, one for each row.
In each layer $G_t$ for $1\le t\le m$, we add an edge between vertices~$i$ and~$j$ if $M_{t,i}=1$ and $M_{t,j}=1$. This finished the construction of $\mathcal{G}$, which can clearly be done in polynomial time.

Next, we argue that $\mathcal{G}$ is a temporal unit interval graph. To this end, note that every layer~$G_t$ of $\mathcal{G}$ is a single clique (consisting of vertices $i$ with $M_{t,i}=1$) and some isolated vertices (the vertices $i$ with $M_{t,i}=0$). Hence, we can clearly find a unit interval representation for every layer $G_t$ of $\mathcal{G}$.

To prove the correctness of the reduction, we first observe that $M$ is the vertices vs.\ maximal cliques matrix of $\mathcal{G}$: there is exactly one non-trivial maximal clique in each layer $G_t$ containing the vertices $i$ with $M_{t,i}=1$. We show $\ell$ columns can be deleted from $M$ such that the remaining matrix $M'$ has the consecutive ones property if and only if $\mathcal{G}$ admits an \opvd of size $\ell$.

$(\Rightarrow)$ Assume there are $\ell$ columns that can be deleted from $M$ such that the remaining matrix $M'$ has the consecutive ones property. Then $M'$ corresponds to vertices vs.\ maximal cliques matrix of $\mathcal{G}'$ which is obtained from $\mathcal{G}$ by removing the $\ell$ vertices corresponding to the deleted columns of $M$. By \cref{thm:recognizeoppig} we have that $\mathcal{G}'$ is an order-preserving temporal unit interval graph. It follows that the removed vertices form an \opvd of size $\ell$ for $\mathcal{G}$.

$(\Leftarrow)$ Assume $\mathcal{G}$ admits an \opvd $X$ of size $\ell$. Then let $M'$ be the matrix obtained from~$M$ by deleting the $\ell$ columns corresponding to the vertices in $X$. Now we have that~$M'$ is the vertices vs.\ maximal cliques matrix of $\mathcal{G}-X$, which is an order-preserving temporal unit interval graph. By \cref{thm:recognizeoppig} we have that $M'$ has the consecutive ones property and hence that $(M,\ell)$ is a yes-instance of \textsc{Consecutive Ones Submatrix by Column Deletions}.
\end{proof}

\section{Conclusion}\label{chap:conc}

We study naturally motivated temporal versions of the classic \textsc{Clique} and \is problems, which are called \TC and \TIS, respectively. We introduce the latter whereas the former has been investigated before~\cite{bentert2019listing,himmel2017adapting,viard2016computing}. We focused mostly on the case where all layers of the input temporal graph are unit interval graphs. For these, we presented a number of algorithms, both approximate and exact, and hardness results that attempt at given a broad picture of the computational complexity of both problems. 

As even the most basic cases are hard for these problems, we focused also on the case where the temporal interval graphs are order preserving. We present linear-time algorithms for both \TC and \TIS for this special case, and generalized these algorithms to  \FPT-algorithms for the vertex deletion distance to order preservation parameter. This generalization heavily relies on our result that order preservation is retained under edge-union and edge-intersection, which is of independent interest since it may also be useful in the context of related problem such as \textsc{Temporal Vertex Cover}~\cite{akrida2020temporal} or \textsc{Temporal Coloring}~\cite{mertzios2019sliding}.

An immediate future work direction is to generalize our results for temporal (non-unit) interval graphs. For most of our results it remains open whether they generalize. We believe that our approximation algorithm does not easily adapt. In fact even for two layers it is unclear how to approximate \MTIS. Our \FPT-algorithm for \TC and \TIS parameterized by the vertex deletion distance to order preservation generalizes to the non-unit interval case assuming the deletion set is part of the input. We leave for future research how to efficiently compute a minimum vertex deletion set to order preservation for temporal non-unit interval graphs.




\bibliography{bibfile}



\end{document}